\def\frac#1#2{{\textstyle{#1\over#2}}}
\DeclareSymbolFont{AMSb}{U}{msb}{m}{n}
\DeclareMathSymbol{\Natural}{\mathbin}{AMSb}{"4E}
\DeclareMathSymbol{\Integer}{\mathbin}{AMSb}{"5A}
\DeclareMathSymbol{\Real}{\mathbin}{AMSb}{"52}
\DeclareMathSymbol{\Rational}{\mathbin}{AMSb}{"51}
\DeclareMathSymbol{\Imaginary}{\mathbin}{AMSb}{"49}
\DeclareMathSymbol{\Complex}{\mathbin}{AMSb}{"43} 
\def\bi{\begin{itemize}}
\def\ei{\end{itemize}}
\def\bd{\begin{description}}
\def\ed{\end{description}}
\def\ben{\begin{enumerate}}
\def\een{\end{enumerate}}
\def\hat#1{{\widehat{#1}}}
\def\2to{{\ {\buildrel 2\over \longrightarrow}\ }}
\def\I1ton{{$I_1,\ldots,I_n$}}
\def\X1ton{{$X_1,\ldots,X_n$}}
\def\Y1ton{{$Y_1,\ldots,Y_n$}}
\def\Z1ton{{$Z_1,\ldots,Z_n$}}
\def\R1ton{{$R_1,\ldots,R_n$}}
\def\e1ton{{$e_1,\ldots,e_n$}}
\def\t1ton{{$t_1,\ldots,t_n$}}
\def\x1ton{{$x_1,\ldots,x_n$}}
\def\y1ton{{$y_1,\ldots,y_n$}}
\def\z1ton{{$z_1,\ldots,z_n$}}
\def\r1ton{{$r_1,\ldots,r_n$}}
\newtheorem{definition}{Definition}
\newtheorem{proposition}{Proposition}
\begin{document}

\title{Modeling asymptotically independent spatial extremes based on Laplace
  random fields}

\author{Thomas Opitz, INRA, UR546 Biostatistics and Spatial Processes,
Thomas.Opitz@paca.inra.fr}

\maketitle

\begin{abstract}
We tackle  the modeling of threshold exceedances in
asymptotically independent stochastic processes by constructions based
on Laplace random fields. Defined as \add{mixtures of } Gaussian random fields
\change{scaled 
with a stochastic variable following an exponential distribution}
{with an exponential variable 
embedded for the variance}, these processes possess useful
asymptotic properties while remaining  statistically
convenient.  Univariate Laplace distribution tails are  part of the limiting generalized Pareto 
distributions for threshold exceedances. After 
normalizing marginal distributions in data, a standard Laplace 
field can be used to capture spatial dependence among extremes. 
Asymptotic properties of Laplace fields are explored and
compared to the classical framework of asymptotic dependence. 
Multivariate joint tail decay rates are
slower than for Gaussian fields with the same covariance structure; hence they provide more conservative
estimates of very extreme joint risks while maintaining asymptotic
independence. Statistical inference  is illustrated on
extreme wind gusts
in the Netherlands where a comparison to the Gaussian dependence model
shows a better goodness-of-fit\remove{ in terms of  Akaike's criterion}.  In this application we fit the
well-adapted 
Weibull distribution\add{, closely
related to the Laplace distribution,} as univariate tail model.\remove{, such that
the normalization of univariate tail distributions can be done
through a simple power transformation of data.}
\end{abstract}
{\bf Keywords:}
spatial extremes; threshold exceedances; asymptotic independence;  elliptical distribution; joint tail decay; wind speed.


\section{Introduction}



Extreme value analysis provides a toolbox for
modeling and estimating extreme events in univariate, multivariate,
spatial and spatiotemporal processes
\citep{Coles.2001,Beirlant.al.2004,Davison.Padoan.Ribatet.2012}. Principal objectives \add{of spatial extreme value analysis}
are the \add{spatial prediction of extremes and the } extrapolation of return levels and periods beyond the
historically observed range of data. A major distinction of dependence
types can be made between asymptotic dependence when 
$\lim_{u\uparrow 1}\mathrm{pr}(F_{X_2}(X_2)\geq u
\mid F_{X_1}(X_1) \geq u) >0$ for two random variables $X_1\sim
F_{X_1}$ and $X_2\sim F_{X_2}$  and asymptotic independence when the
limit is $0$, provided the limit exists. 
Asymptotic independence can arise in environmental and climatic
data for space
lags or  time lags when the most extreme events become
more and more isolated in time, space or space-time. For many
processes like  wind gust speed
or heavy rainfall such behavior seems
plausible owing to physical limits, and it is corroborated by empirical
findings \citep{Davison.al.2013,Thibaud.al.2013,Opitz.al.2015}.
 In this paper, our objective is  to construct asymptotically
independent spatial processes that are flexible and
tractable models with useful properties for modeling threshold
exceedances.

Models for asymptotically independent extremes must adequately capture
the joint tail decay rates in multivariate distributions.  
 A first in-depth analysis of joint tail decay was given by
\citet{Ledford.Tawn.1996,Ledford.Tawn.1997}. Closely related bivariate models
\citep{Ramos.Ledford.2009} provide flexibility in the joint tail, yet an
explicit definition of the probability density cannot be given when
only one component is extreme and the
generalization to higher dimensions suffers from the curse of
dimensionality. A more flexible characterization of
multivariate tail behavior was developed by \citet{Wadsworth.Tawn.2013}. 
Useful models pertaining to this framework are obtained by
inverting max-stable processes \citep{Wadsworth.Tawn.2012}, allowing for
composite likelihood inference. 
Another approach that spans both asymptotically dependent and
asymptotically independent data is presented by
\citet{Wadsworth.al.2014}, who model bivariate tails by assuming
independence among the radial and the angular variable in a
pseudo-polar representation. 

For lack of a unified theory of asymptotic independence, a variety of modeling
approaches have proven useful in practice. They usually suffer from at least one of the following
restrictions: joint tail decay rates are difficult to
characterize; standard inference methods like classical likelihood are not
available;  only bivariate models are tractable and useful; the generalization to the
infinite-dimensional, spatial setup is not possible; the 
univariate tail models prescribed by extreme value theory
do not directly appear  as marginal distributions in the model,
necessitating marginal pretransformations that are not natural in the
extreme value context. 

In the following, we present the novel Laplace model for multivariate and spatial
extremes. It provides a good compromise with respect to the aforementioned
potential shortcomings.  It is parametrized by a covariance function and closely
related to standard Gaussian processes \remove{through a random
  scaling} based on \add{embedding} an exponentially
distributed variable for the variance. The resulting univariate
distributions are of the Laplace type, and the univariate tails
correspond to valid generalized Pareto limits of threshold exceedances. Classical likelihood inference using
threshold exceedances is
straightforward. Joint tail decay rates and
conditional distributions can be characterized in various ways. We use the terminology of \emph{spatial
  processes} for simplicity's sake, but an
extension to the spatiotemporal context is possible through
spatiotemporal covariance functions. The notion of a multivariate
threshold exceedance is not uniquely
defined; here we will concentrate on four sensible choices for extreme value analysis: exceedances observed at
one fixed site $s$, exceedances of a linear combination of values at 
$D$ fixed sites,  or exceedances of either the maximum or minimum
value over $D$ fixed sites.  
 
Section 2 gives a short exposition of some aspects of classical extreme
value theory that are 
necessary to understand univariate tail models, their link to
standard Laplace marginal distributions  and the notion of
asymptotic independence. Section 3 treats  definition and
inference of  the asymptotically independent
Laplace model for threshold exceedances, whose tail behavior is
characterized and contrasted with the asymptotic dependence case. An
application to modeling of spatial wind gust extremes in the Netherlands in Section 4 is
followed by the concluding remarks in Section 5. 


\section{Extreme value theory}
Presentations going beyond this short account can be found in \citet{Resnick.1987,Beirlant.al.2004,deHaan.Ferreira.2006}.
\subsection{Univariate limit distributions}
\label{sec:unilim}

The fundamental limit relation of extreme value theory establishes a
three-parameter limit distribution for adequately rescaled maxima of
an independent and identically distributed sample $X_i$, $i=1,\ldots,n$:
if normalizing sequences $a_n>0$, $b_n$ exist such that 
\begin{equation}
\label{eq:doa}
\max_{i=1,\ldots,n} a_n^{-1}(X_i-b_n) \rightarrow Z, \quad n
\rightarrow \infty,
\end{equation}
with a nondegenerate random variable $Z$, then the
distribution of $Z$ is
necessarily of the generalized extreme value type with cdf
\begin{equation*}
F_Z(z) = \exp\left(-\left[1+\xi\frac{z-\mu}{\sigma}\right]_+^{-1/\xi}\right),
\end{equation*}
with parameters for position $\mu$, for scale $\sigma>0$ and for shape
$\xi$. When $\xi=0$, $F_Z(z)$ is defined as the corresponding limit
$\exp\left(-\exp[-z]\right)\, 1(z>0)$. 
The tail index $\xi$ is crucial to distinguish between Pareto-type
(polynomial) decay for $\xi>0$, exponential decay for $\xi=0$ to a
finite or infinite essential supremum and
inverse polynomial decay to a finite essential supremum for $\xi <0$.
By transforming $Z$ to
$Z^*=0.5\left(1+\xi[Z-\mu]/\sigma\right)^{1/\xi}$, we normalize to  a 
Fr\'echet distribution $F_{Z^*}(z)=\exp(-1/[2z])\, 1_{(0,\infty)}(z)$.

Equivalent formulations in terms of
threshold exceedances have proven useful  both from a
conceptual and practical perspective. 
By absorbing the values of $a_n$ and $b_n$ into the parameters
$\sigma$ and $\mu$, it is then useful to anchor practical modeling in the
tail approximation
\begin{equation}
\label{eq:univtailmodel}
\mathrm{pr}(X\geq x) \approx
\left(1+\xi\frac{x-\mu}{\sigma}\right)^{-1/\xi} = - \log F_{Z}(x), 
\end{equation}
valid for large $x$. When $\xi=0$, this tail probability
is $\mathrm{pr}(X\geq x) \approx \exp(-[x-\mu]/\sigma)$, corresponding to a
generalized exponential tail. By conditioning on an exceedance over a
high threshold $u$ with $u \geq \mu$ if $\xi\geq 0$ and $u < \mu$ otherwise, we get  
\begin{equation}
\label{eq:gpd}
\mathrm{pr}(X\leq x\mid X \geq u) \approx
1-\left(1+\xi\frac{x-u}{\sigma_u}\right)^{-1/\xi}, \quad
\sigma_u=\sigma+\xi(u-\mu), \quad x \geq u, 
\end{equation}
 where the right-hand side defines the cdf of the limiting \emph{generalized
Pareto distribution} for threshold exceedances $x-u$.

\subsection{Normalized marginal distributions}
When we treat the dependence structure in a multivariate setting, it
is convenient to normalize all univariate marginal distributions to
a  \add{parameter-free target
distribution that appears naturally in the extreme value context}. We will focus on standardizations that lead
to limit distributions in \eqref{eq:doa} of the Fr\'echet type
$F_{Z^*}(z)=\exp(-1/[2z])$ ($z>0$) with
tail index $1$ and scale $0.5$;  we call this standardization the
$*$-transformation in the following.
Here we  slightly modify the usual normalization found in the literature
(with scale $1$) in order to establish the link
to the Laplace distribution more easily in the following. \remove{Apart from
the modification of some constants without relevance to the asymptotic theory, this change
of scale leads to no differences in  the exposition.}
In the literature, we often find $X^*=1/[2(1-F_X(X))]$, leading to
 a Pareto target distribution with shape parameter $1$ and lower bound
 $0.5$ when $F_X$ is
continuous. 
Similarly, $X^*=-1/[2\log(F_X(X))]$ establishes the 
Fr\'echet target distribution $F_{X^*}(x)=F_{Z^*}(x)$ when $F_X$ is
continuous. 
It is possible to use more general \add{probability integral
  transforms} for $*$\remove{-transformations.   To preserve the
  asymptotic dependence structure,
 $*$ should fulfill the following three assumptions: (1) preservation of the ordering of observations: 
$X_1>X_2\Rightarrow X_1^*>X_2^*$; (2) nonnegativity: $\mathrm{pr}(X^*\geq 0) =1$;
(3) standard Pareto tail behavior: $2x\times \mathrm{pr}(X^*\geq x)
  \rightarrow 1$ for $x\rightarrow\infty$. } \add{, as long as
  they lead to nonnegativity and
  standard Pareto tail behavior, \emph{i.e.}, $\mathrm{pr}(X^*\geq 0) =1$ and $2x\times \mathrm{pr}(X^*\geq x)
  \rightarrow 1$ for $x\rightarrow\infty$.}

\remove{To preserve the ordering, it suffices to use probability integral
transforms.} Concerning  the behavior in the bulk of the
target distribution, there is no other constraint than
nonnegativity \remove{. When moving towards the limit,}\add{since} the normalization of
$X^*$ by 
$a_n=n$ drives all non-extreme values from the bulk to $0$
\add{in the limit.}
However, when modeling and estimating multivariate or spatial
extremal behavior from a finite sample of data, we often also need
to specify the bulk behavior in the target distribution, see
\citet{Coles.Tawn.1991,Wadsworth.Tawn.2012,deCarvalho.al.2014}, for
instance. 
For a  useful choice of  target distribution, we here 
focus on the following two properties:   (1) a decreasing density on $(0,\infty)$: $f_{X^*}(x_1) \geq
  f_{X^*}(x_2)$ for any $0< x_1 \leq x_2 < \infty$
(2) exact Pareto tail behavior: $\mathrm{pr}(X^*\geq x) = 1/(2x)$ for
  $x\geq u$ with  $u>0.5$ chosen as small as possible. \add{These conditions yield a $*$-transformation ensuring a distribution of  $X^*$  that is as close as possible to the asymptotic
    univariate limit since the convergence of maxima to
    $Z^*$ is equivalent to the convergence of the renormalized sample $\{ X^*_i/n,\, i=1,\ldots,n\}$
 to a Poisson limit process on $[0,\infty)$ with
    decreasing intensity
    $\mathrm{d}x /(2x^{2})$  associated to the intensity measure $\Lambda([x,\infty))=1/(2x)$. }
The only distribution to satisfy these
requirements is characterized by the density 
\begin{equation}
\label{eq:loglaplace}
f_{X^*}(x)=\begin{cases} 1/(2x^2) & \text{ if } x\geq 1, \\
0.5  & \text{ if }  0 \leq x < 1, \\
0  &\text{ otherwise. } 
\end{cases}
\end{equation}
\remove{\protect see \citet{Opitz.2013a} (in French) for a more profound discussion of this and other target distributions. }
This target distribution is a mixture with probability $0.5$ of a uniform distribution on
$(0,1)$ and a standard Pareto distribution on $[1,\infty)$. 
Its importance for the following developments comes from considering
the distribution of $\log X^*$, whose density defines the standard Laplace
distribution \citep{Kotz.al.2001}, 
\begin{equation*}
f_{\log X^*}(x) = 0.5\exp(-|x|). 
\end{equation*}
Therefore, the Laplace distribution can be
considered as a natural standard
marginal distribution for the modeling of multivariate extreme values.  

\subsection{Multivariate and spatial limits}
Univariate limits are readily generalized to the multivariate and
spatial domain. For a sequence of independent and identically distributed
copies $\{X_i(s)\}$ ($i=1,2,\ldots,n$) of a spatial process $\bm X=\{X(s)\}$, we need normalizing sequences $ a_n(s) >0$, $
b_n(s)$ such that the convergence of componentwise maxima holds in
terms of finite-dimensional distributions: 
\begin{equation}
\label{eq:mvdoa}
\max_{i=1,\ldots,n} a_n(s)^{-1}(X_i(s)-b_n(s)) \rightarrow Z(s), \quad n
\rightarrow \infty,
\end{equation}
with nondegenerate marginal distributions in the limiting max-stable
process $\bm Z=\{Z(s)\}$; we say that $\bm X$ is in the maximum domain of attraction of
$\bm Z$. The univariate convergence of marginal distributions
corresponds to the theory in Section \ref{sec:unilim}. 
We say that a vector $X(\bm s)=(X(s_1),\ldots,X(s_D))$ is
asymptotically independent if the components $Z(s_j)$, $j=1,\ldots,D$,
are mutually independent. 
The limiting dependence
structure of a multivariate vector can be characterized
for $*$-transformed margins. Convergence \eqref{eq:mvdoa} is equivalent to the convergence of all
univariate distributions in the sense of \eqref{eq:doa} together with
the following condition: a $(-1)$-homogeneous measure $\eta$ with
$\eta(t A)=t^{-1}\eta(A)$ exists such that,  for any set $A$ relatively compact in
$[\bm 0, \bm \infty]\setminus\{\bm 0\}\subset \overline{\mathbb{R}}^D$ with $\eta(\partial A)=0$, we have 
\begin{equation}
\label{eq:vague}
  t\,\mathrm{pr} ( X^*(\bm s)/t \in A) \rightarrow \eta(A), 
  \quad t\rightarrow\infty,
\end{equation}
for the marginally normalized vector $X^*(\bm s)$.
Loosely speaking, the vague convergence property \eqref{eq:vague}
means that  we can use the approximation $\mathrm{pr}( X^*(\bm s) \in A) \approx \eta(A)$ for
``extreme'' sets $A$ bounded far from the origin $\bm 0$. 
In practice, the homogeneity of $\eta$ provides a simple formula for the
extrapolation of extreme event probabilities to more extreme yet
hitherto unobserved events: 
$\mathrm{pr}( X^*(\bm s) \in
t A) \approx t^{-1}\mathrm{pr}( X^*(\bm s) \in A)$, for $t\geq 1$ and an
extreme event $A$. By transforming $X^*(\bm s)$ to $\log X^*(\bm s)$, this
relation can equivalently be written as $\mathrm{pr}(\log  X^*(\bm s) - t
\in A) \approx \exp(-t)\,\tilde{\eta}(A)$ with $\tilde{\eta} = \eta
\circ \exp$, $t\geq 0$ and $A \subset \mathbb{R}^D$ an extreme event. 
Finally, we point out  that $\eta((\bm 0, \bm \infty))=0$ when
$X(\bm s)$ is
asymptotically independent\add{, \emph{i.e.}, the limit measure $\eta$ has
a singular structure with all of its mass concentrated on the Euclidean
axes. }

\subsection{Tail dependence coefficients}
\label{sec:coeftail}
Bivariate tail dependence coefficients are useful summaries of 
tail dependence, and in a spatial context we can study their evolution
with respect to the spatial lag between two sites. The \emph{tail correlation coefficient} $\lambda$ of a bivariate random
vector $(X_1,X_2)$  is defined as 
\begin{align}
\lambda &= \lim_{u\uparrow 1} \mathrm{pr}(F_{X_2}(X_2)\geq u \mid F_{X_1}(X_1)\geq
u) \nonumber \\ &= \lim_{u\uparrow 1} \mathrm{pr}(F_{X_2}(X_2)\geq u, F_{X_1}(X_1)\geq
u)/(1-u) \in [0,1]. \label{eq:lambda}
\end{align}
It is well-defined and symmetric in $X_1$ and $X_2$ if $(X_1,X_2)$ is
in a bivariate maximum domain of attraction. Its value  is
$0$ if and only if $X_1$ and $X_2$ are asymptotically independent. 
If this is the case,  more information is provided by the  \emph{residual
  coefficient} $\rho\in [0,1]$ \citep{Ledford.Tawn.1996,Draisma.al.2004}, where 
\begin{equation}
\label{eq:rho}
\mathrm{pr}(X_1^*\geq
x, X_2^*\geq x) = \ell(x) x^{-1/\rho}
\end{equation}
for large $x$
with $\ell(x)>0$  a slowly varying function such that
$\ell(tx)/\ell(t)\rightarrow 1$ for $x>0$ and 
$t\rightarrow\infty$. 
When $\rho$ is positive, it corresponds to the tail index $\xi$ of
$\min(X_1^*,X_2^*)$. For independent variables $X_1$ and $X_2$, we
have 
$\rho=0.5$. We remark that $\rho$ is always equal to $1$ when $X_1$ and $X_2$ are
asymptotically dependent, such that it is of interest to study either $\lambda$
or $\rho$, depending on the presence of asymptotic dependence. 
For the bivariate Gaussian distribution, the residual coefficient is
$(1+\rho_{\mathrm{lin}})/2$, with $\rho_{\mathrm{lin}}$ the linear correlation coefficient, and covers the full range $[0,1]$ of theoretically
possible values when $\rho_{\mathrm{lin}}$ varies
from $-1$ to $1$.   In the following, we exclude the case $\rho=0$
corresponding to strong negative assocation of $X_1$ and $X_2$. 
The residual coefficient $\rho$ can also be defined more generally for
$D>2$ by 
\begin{equation}
\label{eq:rhomv}
\mathrm{pr}(X_1^*\geq
x, \ldots, X_D^*\geq x) = \ell(x) x^{-1/\rho}
\end{equation}
for large $x$ if an adequately defined slowly varying function $\ell$
exists. 
 
\subsection{From modeling to inference of extremal dependence}
\label{sec:modandinf}
Since there is no natural ordering of multivariate data, different
notions of threshold exceedances can be of interest with respect to
the application context but also  with respect to tractability of statistical
inference. 
We focus on four useful choices that are common in
the literature. Given a fixed threshold $u$, a vector $\bm x$ is a \emph{marginal 
  exceedance} in $x_1$ if $x_1\geq u$, it is a \emph{sum exceedance}
if $\sum_{j=1}^D x_j\geq u$, it is a \emph{max exceedance} if
$\max_{j=1,\ldots,D} x_j \geq u$, and it is a \emph{min exceedance} if
$\min_{j=1,\ldots,D} x_j \geq u$.   For the corresponding exceedance sets, we write
$A_{x_1}(u)=\{\bm x \mid x_1\geq u\}$,
$A_{\mathrm{sum}}(u)=\{\bm x \mid \sum_j x_j\geq u\}$, $A_{\mathrm{max}}(u)=\{\bm x \mid \max_j x_j\geq
u\}$ and $A_{\mathrm{min}}(u)=\{\bm x \mid \min_j x_j\geq
u\}$. The limit \eqref{eq:vague} holds for these sets when $u>0$.

Whereas models for asymptotically
dependent data have a strong theoretical foundation, there is no
natural model class for describing how asymptotically independent data
approach their limit. Using the limiting asymptotic independence would
usually be too simplistic (by imposing independence among
maxima of components, for instance), and it would considerably underestimate joint risks at the subasymptotic
level.\remove{\protect Moreover, the limiting Pareto processes for threshold
exceedances \citep{Ferreira.deHaan.2014,Dombry.Ribatet.2014,Thibaud.Opitz.2014} would  not be well
defined  when the domain is not finite.}  It is therefore sensible to focus on extremal dependence models that adequately
capture the rate of convergence towards asymptotic independence. 
A simple approach consists in combining an appropriate model for
univariate tails with a Gaussian dependence model. Known as \emph{Gaussian copula modeling} \citep{Davison.Padoan.Ribatet.2012}, this approach has been explored by \citet{Bortot.al.2000}
for oceanographic data 
and  by \citet{Renard.Long.2007}
for hydrological data, among others. Its extension to the spatial domain is
straightforward. 

For exploratory analysis, it is useful to investigate 
estimates of tail dependence coefficients based on the empirical
counterparts of \eqref{eq:lambda}  and \eqref{eq:rho}. In the spatial context, one
can study their variation with distance. For Gaussian copula modeling,
likelihood estimators $\hat{\rho}_{\mathrm{lin}}$ of the linear correlation $\rho_{\mathrm{lin}}$, based on
threshold exceedances, are readily defined and yield an estimate
$(1+\hat{\rho}_{\mathrm{lin}})/2$ of the residual coefficient. 
Likelihood estimation based on threshold exceedances is a standard
approach for both asymptotically dependent and asymptotically
independent data. After fixing an exceedance region $A$, we assume a
parametric distribution $F_A$ according to a parameter vector $\bm
\theta$ for exceedances $\bm X\mid \bm X \in A  \sim F_A$. 
The choice of threshold $u$ in $A$ is rarely
straightforward since it is usually a trade-off between bias and
variance. If a lower threshold decreases estimation variance by using
more information from data, one also has to take into account the rate of
convergence towards the asymptotic regimes used for modeling. 
The likelihood $L$ for jointly estimating
the exceedance probability $\mathrm{pr}(\bm X \in A)$ and the extremal behavior
is based on the censoring of data $\bm x$ not falling into $A$:
\begin{equation}
\label{eq:likeligeneral}
L(\bm \theta; \bm x) = 1_{A^C}(\bm x) \times(1- \mathrm{pr}(\bm X \in A) ) + 1_{A}(\bm
x) \mathrm{pr}(\bm X \in A)f_A(\bm x)
\end{equation}
where $f_A$ is the density of the exceedance distribution $F_A$.

\section{Properties and modeling of exceedances of Laplace random fields}
\subsection{Laplace random fields}
We denote $\bm W= \{W(s),s\in K\}$ a Gaussian random field defined on
a nonempty domain $K$. Multivariate distributions represent a special
case by setting $K=\{1,\ldots,D\}$ for $D\in \mathbb{N}$. 
We provide the constructive definition of Laplace random fields as a
\add{mixture of }
Gaussian field\change{ having exponential variance distribution}{s
  based on embedding an exponential variable with scale
$2$ for the variance of the Gaussian field}. This definition and basic properties\remove{, along with many other
characterizations of univariate and multivariate Laplace distributions,} can be
found in the monograph of \citet{Kotz.al.2001}. Multivariate Laplace
distributions are also discussed in \citet{Eltolft.al.2006}.
\begin{definition}[Laplace random field]
For $\bm W$ a centered Gaussian random field  and  a random variable
$Y \geq 0$ with $Y^2$ following an exponential distribution with scale
$2$ and independent of
$\bm W$, the random field 
\begin{equation}
\label{eq:laplace}
\bm X = \{ X(s), s \in K\} = \{Y W(s), s \in K \}
\end{equation}
is called \emph{Laplace random field} (or \emph{Laplace field} in
short).  We call $\bm X$ a \emph{standard Laplace field} if $\bm W$ is
a standard Gaussian field. 
If $\bm W$ has a covariance function or covariance matrix $\Sigma$, we write $\bm X
\sim \mathcal{L}(\Sigma)$. 
\end{definition}
\add{The distribution of $Y$ is known as Rayleigh distribution with parameter $1$. For standard Laplace vectors, we use the
  notation $ \Sigma^*$ to indicate that $\Sigma^*$ is a correlation
matrix. In the remainder of this paper, the default assumption is that
$\Sigma$ has full rank such that the inverse $\Sigma^{-1}$ is well
defined.}\remove{\protect We briefly recall properties of Laplace fields that are
straightforward from their univariate and  multivariate
characterizations \citep{Kotz.al.2001}.}
The univariate marginal distributions $F_{X(s)}$ of a standard Laplace
field $\bm X$ are symmetric with mean
$0$ and variance $2$,  and they have pdf
$f_{X(s)}(x) = 0.5\exp(-|x|)$
corresponding to the univariate Laplace distribution. 
The variable $Y$  has pdf  
$f_{Y}(y) = y \exp(-0.5y^2)$.
The covariance function of $\bm X$ is $2\Sigma$. 
If $\Sigma(\bm s)$ is the covariance matrix of $W(\bm s)
=(W(s_1),\ldots,W(s_D))$, the conditional distribution of $X(\bm
s)$ given $Y=y$ is Gaussian with covariance $y^2 \Sigma(\bm s)$. 
Using the notation $\nu = 0.5(2-D)$, the pdf of $X(\bm s)$ is 
\begin{equation}
\label{eq:mvlaplace}
f_{X(\bm s)}(\bm x) = \frac{2^{0.5}}{(2\pi)^{0.5D} |\Sigma(\bm s)|^{0.5}}
\left(0.25 \bm x'\Sigma(\bm s)^{-1} \bm x\right)^{0.5\nu}
K_{\nu}\left(\sqrt{\bm x' \Sigma(\bm s)^{-1} \bm x}\right).
\end{equation}
Here, $K_{\nu}$ is the modified Bessel
function of the third kind, see Equation (A.0.4) in
\citet{Kotz.al.2001}. 
Whereas marginal densities are continuous but not differentiable at
$0$, the joint density $f_{X(\bm s)}$ has a singularity at $\bm 0$ for
$D>1$ such that $f_{X(\bm s)}(\bm x)\rightarrow \infty$ when
$\|\bm x\|\rightarrow 0$. Since our interest will be directed towards
extreme events with $\|\bm x\|$ strictly separated from $0$, we can
neglect this 
particularity. 

Laplace vectors are part of the larger class of elliptically
contoured random vectors, which can be represented using a random
\emph{radial  variable} $R\geq 0$ and a $D\times D$ deterministic matrix $A$ that
operates on a random vector $\bm U$, independent of $R$,  with uniform distribution over the
Euclidean unit sphere in $\mathbb{R}^D$. Then 
$ R A \bm U$
defines an elliptic random vector with \emph{dispersion matrix}
$\Sigma=AA'$. 
For the multivariate Laplace distribution, we get \citep{Kotz.al.2001}
\begin{equation*}
f_R(r) = 2^{1-0.25D}\Gamma(0.5D)^{-1}r^{D-1}K_{0.5D-1}(r), \quad r>0.
\end{equation*}
\add{For Laplace vectors, we get $X(\bm s) = Y W(\bm s) = YR_{W(\bm s)} A \bm U$ with
  $R_{W(\bm s)}$ and $\Sigma(\bm s)=AA'$ associated to the elliptical
  Gaussian vector $W(\bm s)$. The dispersion matrices of the Gaussian
  vector and the Laplace vector are the same and the  random vectors
  differ only by their radial variable. }

\subsection{Asymptotic behavior of exceedances}
\label{sec:asymptotics}
The univariate standard Laplace tails\remove{(or \emph{standard half-exponential}
  tails)} with $\mathrm{pr}(X(s)\geq
x)=0.5\exp(-x)$ for $x>0$  correspond to the univariate tail model
\eqref{eq:univtailmodel} with $\xi=0$, $\sigma=1$ and $\mu=-\log 2$. For
exceedances above a fixed threshold $u>0$, we get the conditional
generalized Pareto distribution $F_{X(s)\mid X(s)\geq u}(x) = 1
- \exp(-(x-u))$ for $x\geq u$ according to \eqref{eq:gpd}. 

We now
characterize multivariate tail behavior of a Laplace vector denoted as $\bm
X=(X_1,\ldots,X_D) \sim \mathcal{L}(\Sigma)$ with 
$\Sigma=(\sigma_{j_1j_2})_{1\leq j_1,j_2\leq D}$. 
Due to \change{geometric sum-stability of elliptical
  distributions}{the elliptical structure of its
  distribution}, weighted sums of
the components of $\bm X$ are Laplace distributed: $\sum_{j=1}^D \omega_j X_j \sim
\mathcal{L}(\bm\omega'\Sigma\bm\omega)$,  and we have \remove{geometric} sum-stability
with respect to the generalized Pareto tail family with $\xi=0$, $\mu=-\log 2$ and
$\sigma_u=\sqrt{\bm \omega'\Sigma\bm\omega}$.
The mean of the components of $\bm X$ follows a $
\mathcal{L}(D^{-2}\sum_{1\leq j_1,j_2\leq D} \sigma_{j_1j_2})$
distribution.  When $\Sigma=\Sigma^*$, then  $\bm X^*=\exp(\bm X)$ has standard log-Laplace
margins \eqref{eq:loglaplace} whose shape parameter is $\xi=1$, and the
geometric mean of components has tail behavior leading to a Pareto 
exceedance distribution with shape parameter $\xi=D^{-2}\sum_{1\leq
  j_1,j_2\leq D} \sigma_{j_1j_2} \leq 1$ for values above
$u=1$. This is in contrast to the multivariate standard Pareto limits
for asymptotically dependent distributions, where sums of components are again
Pareto distributed with shape $\xi=1$, see
\citet{Falk.Guillou.2008}. 

A useful coefficient of tail dependence for Laplace vectors can be defined through the scale parameter $\sigma=(\sum_{1\leq
  j_1,j_2\leq D} \sigma^*_{j_1j_2})^{0.5}$  of the sum of components, where
$\sigma^*_{j_1j_2}$ are the entries of the correlation matrix
$\Sigma^*$ associated to $\Sigma$  in order to
remove the effect of the marginal scale. When $D=2$, this coefficient
is $\sqrt{2(1+\sigma^*_{12})}$ and varies between $0$
(for $\sigma^*_{12}=-1$) and $2$ (for $\sigma^*_{12}=1$). When $D>2$, it
is lower-bounded by $0$ and reaches its upper bound $D$ in the case of
complete dependence ($\sigma^*_{j_1j_2}=1$ for all $1\leq j_1,j_2 \leq
D)$. This coefficient allows us to see how pairwise coefficients
relate to higher-order coefficients, which is not straightforward for
the standard coefficients  $\lambda$ and $\rho$ of extreme value
theory defined in Section \ref{sec:coeftail}, see
\citet{Schlather.Tawn.2003} for inequalities related to $\lambda$ in a
multivariate context, and see \citet{Strokorb.2015} for 
spatial properties. 

\begin{proposition}[Joint tail decay rates]
\label{theor:mvtail}
The residual coefficient of $(X_1,X_2)\sim\mathcal{L}(\Sigma^*)$ with 
$\sigma^*_{12}=\rho_{\rm lin}$ is  $\rho=\sqrt{(1+\rho_{\rm lin})/2}$. 
The joint tail decay is  characterized by a
conditional limit for $x,y\in (0,1]$  when $u\rightarrow \infty$, 
\begin{equation}
\label{eq:bvsurv}
\mathrm{pr}(X_1> q_{1-x/u},X_2 > q_{1-y/u}\mid X_1> q_{1-1/u},X_2 >
q_{1-1/u}) \rightarrow (xy)^{1/(2\rho)}, 
\end{equation}
where $q_{p}$ is the quantile of the standard Laplace distribution
associated to the probability $p\in (0,1)$.
The multivariate residual coefficient defined in \eqref{eq:rhomv} is
$\rho=1/\sqrt{\bm e'(\Sigma^*)^{-1}\bm e}$ for a Laplace random vector $\bm
X\sim\mathcal{L}(\Sigma)$ with $\Sigma^*$ the correlation matrix
associated to $\Sigma$ and 
$\bm e = (1,\ldots,1)'$. 
\end{proposition}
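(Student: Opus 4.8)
The plan is to reduce all three assertions to one statement: the exponential rate of decay of the lower–left tail $\mathrm{pr}(X_1>c_1,\dots,X_D>c_D)$ of the Laplace vector as the thresholds $c_j\to\infty$. Two elementary translations set this up. First, the standard Laplace margins obey $\mathrm{pr}(X(s)\geq t)=\tfrac12 e^{-t}$ for $t>0$, so $X^*(s)=\exp(X(s))$ has $\mathrm{pr}(X^*(s)\geq x)=1/(2x)$ for $x\geq1$ and therefore $\mathrm{pr}(X_1^*\geq x,\dots,X_D^*\geq x)=\mathrm{pr}(X_1\geq\log x,\dots,X_D\geq\log x)$; comparing with \eqref{eq:rhomv}, the residual coefficient equals $1/\kappa_\infty$ whenever $\mathrm{pr}(\bm X\geq t\bm e)=e^{-\kappa_\infty t+o(t)}$. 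Second, the standard Laplace quantile function is $q_{1-p}=-\log(2p)$, so with $t:=q_{1-1/u}=\log u-\log2\to\infty$ we get $q_{1-x/u}=t-\log x$ and $q_{1-y/u}=t-\log y$, both at least $t$ for $x,y\in(0,1]$; thus \eqref{eq:bvsurv} is exactly a statement about the ratio $\mathrm{pr}(X_1>t-\log x,X_2>t-\log y)/\mathrm{pr}(X_1>t,X_2>t)$ as $t\to\infty$.

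For the core estimate I would use the explicit density \eqref{eq:mvlaplace} together with $K_\nu(z)\sim\sqrt{\pi/(2z)}\,e^{-z}$ as $z\to\infty$, which gives $f_{\bm X}(\bm x)=h(\bm x)\exp(-\sqrt{\bm x'(\Sigma^*)^{-1}\bm x})$ with $h$ varying only polynomially along rays. A multivariate Laplace (saddle-point) expansion of $\mathrm{pr}(\bm X\geq\bm c)=\int_{\{\bm x\geq\bm c\}}f_{\bm X}(\bm x)\,d\bm x$ as $\bm c\to\infty$ with bounded coordinate differences shows the integral concentrates at the point of $\{\bm x\geq\bm c\}$ minimizing the convex function $\bm x\mapsto\sqrt{\bm x'(\Sigma^*)^{-1}\bm x}$; the first-order (Kuhn--Tucker) conditions place this minimizer at the corner $\bm x=\bm c$ exactly when $(\Sigma^*)^{-1}\bm c\geq\bm 0$ componentwise, which holds for large $\bm c$ with near-equal coordinates whenever $(\Sigma^*)^{-1}\bm e\geq\bm 0$ -- automatic for $D=2$ with $\rho_{\mathrm{lin}}\in(-1,1)$ and assumed in general. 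Hence $\mathrm{pr}(\bm X\geq\bm c)=\tilde h(\bm c)\exp(-\sqrt{\bm c'(\Sigma^*)^{-1}\bm c})$ with $\tilde h$ polynomially varying. (Equivalently: condition on $Y$, write $\mathrm{pr}(\bm X\geq\bm c)=\int_0^\infty\mathrm{pr}(\bm W\geq\bm c/y)\,f_Y(y)\,dy$, insert Gaussian orthant asymptotics, and evaluate the $y$-integral by Laplace's method; the saddle sits at $y$ of order $\|\bm c\|^{1/2}$ and reproduces the same rate.)

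The three conclusions follow by specialization. With $\bm c=(\log x)\bm e$ one gets $\sqrt{\bm c'(\Sigma^*)^{-1}\bm c}=(\log x)\sqrt{\bm e'(\Sigma^*)^{-1}\bm e}$ and $\tilde h((\log x)\bm e)$ a fixed power of $\log x$, hence slowly varying in $x$, so $\mathrm{pr}(X_1^*\geq x,\dots,X_D^*\geq x)=\ell(x)\,x^{-\sqrt{\bm e'(\Sigma^*)^{-1}\bm e}}$, i.e. $\rho=1/\sqrt{\bm e'(\Sigma^*)^{-1}\bm e}$; specializing to $D=2$ and computing $\bm e'(\Sigma^*)^{-1}\bm e=2/(1+\rho_{\mathrm{lin}})$ gives $\rho=\sqrt{(1+\rho_{\mathrm{lin}})/2}$. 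For \eqref{eq:bvsurv}, put $\bm c=(t-\log x,\,t-\log y)'$ and $\kappa:=\bm e'(\Sigma^*)^{-1}\bm e=2/(1+\rho_{\mathrm{lin}})$; expanding $\sqrt{\bm c'(\Sigma^*)^{-1}\bm c}=t\sqrt\kappa-\tfrac{\sqrt\kappa}{2}(\log x+\log y)+O(1/t)$ and using $\tilde h(\bm c)/\tilde h(t\bm e)\to1$, the ratio tends to $\exp(\tfrac{\sqrt\kappa}{2}(\log x+\log y))=(xy)^{\sqrt\kappa/2}=(xy)^{1/(2\rho)}$, since $\sqrt\kappa=1/\rho$ from the first part.

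The main difficulty lies in the rigor of the saddle-point step: one needs the expansion to be uniform as $\bm c\to\infty$ along the family $\bm c=t\bm e+\bm\xi$ so that the polynomial prefactors genuinely cancel in the ratio \eqref{eq:bvsurv}, and one must handle the location of the quadratic-form minimizer carefully -- for a general correlation matrix the active constraints may select a strict sub-orthant, replacing the exponent by a smaller quadratic form, so the displayed formulas are precisely the ones valid in the regime $(\Sigma^*)^{-1}\bm e\geq\bm 0$ (which covers every bivariate case with $\rho_{\mathrm{lin}}>-1$). A convenient consistency check is that $\{X_j>c_j\ \forall j\}\subset\{\bm e'\bm X>\bm e'\bm c\}$ together with the sum-stability $\bm e'\bm X\sim\mathcal{L}(\bm e'\Sigma^*\bm e)$ yields the crude exponential rate $D/\sqrt{\bm e'\Sigma^*\bm e}\leq\sqrt{\bm e'(\Sigma^*)^{-1}\bm e}$ by Cauchy--Schwarz, consistent with the direction of the sharp rate; the matching lower bound comes from restricting the integral to a neighborhood of the saddle.
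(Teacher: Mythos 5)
Your argument is correct in all its computations and reaches the stated conclusions, but it takes a genuinely different route from the paper. The paper's proof is citation-based: for the bivariate claims it verifies the hypotheses of Theorem 2.1 of Hashorva (2010) — namely that the radial variable $R$, with $f_R(r)\sim c_R\sqrt{r}\,e^{-r}$ from the Bessel asymptotics, lies in the Gumbel domain of attraction with auxiliary function $w\equiv 1$ and Weibull tail coefficient $\theta=1$, whence $\rho=([1+\rho_{\rm lin}]/2)^{\theta/2}$ and \eqref{eq:bvsurv} follow from that theorem — and for the multivariate coefficient it invokes Example 1 of Nolde (2014) with $\alpha=1$. You instead work directly from the explicit density \eqref{eq:mvlaplace}, extract the exact exponential rate $\exp(-\sqrt{\bm x'(\Sigma^*)^{-1}\bm x})$, and run a multivariate Laplace/saddle-point expansion of the orthant integral; your marginal/quantile bookkeeping ($q_{1-p}=-\log(2p)$, $t=\log u-\log 2$), the identity $\bm e'(\Sigma^*)^{-1}\bm e=2/(1+\rho_{\rm lin})$, and the first-order expansion $\sqrt{\bm c'(\Sigma^*)^{-1}\bm c}=t\sqrt{\kappa}-\tfrac{\sqrt{\kappa}}{2}(\log x+\log y)+O(1/t)$ are all right and do yield $(xy)^{1/(2\rho)}$. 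What the paper's route buys is brevity and rigor inherited from the cited theorems; what yours buys is self-containedness, a unified treatment of all three claims from one tail estimate, and — importantly — an explicit identification of the Kuhn--Tucker condition $(\Sigma^*)^{-1}\bm e\geq\bm 0$ under which the minimizer of the quadratic form over $\{\bm x\geq\bm e\}$ sits at the corner. That condition is automatic for $D=2$ but can fail for $D\geq 3$, in which case the true decay rate is $\min\{\sqrt{\bm x'(\Sigma^*)^{-1}\bm x}:\bm x\geq\bm e\}<\sqrt{\bm e'(\Sigma^*)^{-1}\bm e}$; the proposition (and the cited example) states the formula without this qualification, so your caveat is a genuine refinement rather than a defect. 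The one soft spot is the uniformity of the saddle-point expansion along $\bm c=t\bm e+\bm\xi$ needed for the prefactors to cancel in \eqref{eq:bvsurv}; you flag this yourself, and it is routine to complete given that the exponent's gradient direction converges and the prefactor is polynomially varying, so I do not regard it as a gap.
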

\begin{proof}
For the bivariate results, we check the conditions stated
in Theorem 2.1 of \citet{Hashorva.2010}, which establishes joint tail
decay rates for a large class of bivariate elliptical random vectors. 
\add{\protect The theorem states the following: if a function
  $w(\cdot)$ and the so-called \emph{Weibull tail coefficient} $\theta
\geq 0$ exist such
that
\begin{equation*}
\lim_{r\rightarrow\infty}\frac{1-F_R(r+t/w(r))}{1-F_R(r)}=\exp(-t) \
  \forall t\in\mathbb{R}, \quad \lim_{t\rightarrow\infty} w(ct)/w(t)
  =c^{\theta-1} 
  \ \forall c>0,
\end{equation*}
then $\rho=([1+\rho_{\rm lin}]/2)^{\theta/2}$. 
In our case, }we exploit that the Bessel function of the third kind has asymptotic behavior $K_{\nu}(x) \sim \sqrt{\pi/(2x)}\exp(-x)$ when $x$ tends to infinity. 
Therefore, $f_R(r) \sim c_R \sqrt{r} \exp(-r)$ with a constant $c_R>0$.
Using l'H\^opital's rule, we obtain
\begin{equation*}
\lim_{r\rightarrow\infty}  \frac{1-F_R(r+t)}{1-F_R(r)} = \lim_{r\rightarrow \infty} \frac{\sqrt{r+t}\exp(-(r+t))}{\sqrt{r}\exp(-r)} = \exp(-t).
\end{equation*}
We have $w(t)= 1 =
t^{\theta-1}$ with $\theta=1$, yielding $\rho=\sqrt{(1+\rho_{\mathrm{lin}})/2}$.
The convergence \eqref{eq:bvsurv} is a special case of  Theorem 2.1
from \citet{Hashorva.2010}, where we exclude the cases $x > 1$ or $y>1$ to obtain our
slightly simpler 
formulation in terms of conditional probabilities. 
The general multivariate residual coefficient is obtained in Example 1
of \citet{Nolde.2014} with $\alpha=1$ for the Laplace distribution. 
\end{proof}
For given linear correlation coefficient $\rho_{\rm
  lin}$,  the bivariate residual coefficient of Laplace distributions is
the square root of the Gaussian equivalent. The slower joint tail
decay rate is due to the stochastic variance embedded in the Gaussian process. 
A bivariate Laplace vector with $\rho_{\rm lin}=0$ has  uncorrelated components
$X_1$ and $X_2$, but the
corresponding residual coefficient $1/\sqrt{2}$ is different from
$1/2$ corresponding to independent $X_1$ and $X_2$. 
The following proposition resumes the extrapolation of probabilities for exceedance
sets. 
\begin{proposition}[Extrapolation of  exceedance probabilities]
\label{cor:exc}
For $\bm X\sim \mathcal{L}(\Sigma)$ a $D$-dimensional Laplace vector, $u>0$ a threshold and $\bm{t}=(t,\ldots,t)>0$ defining a
translation $\bm X - \bm{t}$ of the vector, we observe:
\begin{align*}
\mathrm{pr}(\bm X - \bm{t} \in A_{\mathrm{sum}}(u)) &=
\exp\left(-\frac{D}{\sqrt{\bm e'
    \Sigma \bm e}} \times t\right) \mathrm{pr}(\bm X \in A_{\mathrm{sum}}(u)),
\\
\mathrm{pr}(\bm X - \bm{t} \in A_{x_1}(u)) &=
\exp\left(-t/\sqrt{\sigma_{11}}\right) \mathrm{pr}(\bm X \in
A_{x_1}(u)), \\
\mathrm{pr}(\bm X - \bm{t} \in A_{\mathrm{max}}(u)) & \sim
\exp\left(-t/\sigma\right) \mathrm{pr}(\bm X \in A_{\mathrm{max}}(u)),\quad u
\rightarrow \infty,
\end{align*}
where $\sigma=\sqrt{\max(\sigma_{11},\ldots,\sigma_{DD})}$ in the last
equation. 
When $\Sigma=\Sigma^*$ is a correlation matrix, we further have 
\begin{equation*}
\mathrm{pr}(\bm X - \bm{t} \in A_{\mathrm{min}}(u)) \sim
\exp\left(-\sqrt{\bm e'(\Sigma^*)^{-1}\bm e} \times t\right)
\mathrm{pr}(\bm X \in A_{\mathrm{min}}(u)), \quad u\rightarrow
\infty. 
\end{equation*}
\end{proposition}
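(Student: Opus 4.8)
The plan is to handle the four exceedance sets one at a time, reducing each to the tail of a single real-valued functional of $\bm X$ and using that the translation $\bm X\mapsto\bm X-\bm t$ with $\bm t=(t,\ldots,t)$ shifts that functional by an explicit amount. Two of the identities turn out to be exact because the relevant functional is itself exactly Laplace distributed; the other two are merely asymptotic because the functional has a tail that is only asymptotically exponential.

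\emph{Sum and marginal exceedances.} The event $\bm X-\bm t\in A_{\mathrm{sum}}(u)$ is $\bm e'\bm X\geq u+Dt$. Since $\bm e'\bm X=Y\,(\bm e'\bm W)$ with $\bm e'\bm W$ a centered Gaussian variable of variance $\bm e'\Sigma\bm e$, the sum $\bm e'\bm X$ is $\mathcal{L}(\bm e'\Sigma\bm e)$ with survival function $0.5\exp(-x/\sqrt{\bm e'\Sigma\bm e})$ for $x\geq0$, as noted before the statement; hence, for $u,t>0$, $\mathrm{pr}(\bm e'\bm X\geq u+Dt)=\exp(-Dt/\sqrt{\bm e'\Sigma\bm e})\,\mathrm{pr}(\bm e'\bm X\geq u)$, which is the first claim. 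The marginal case is identical with $\bm e'\bm X$ replaced by $X_1\sim\mathcal{L}(\sigma_{11})$ and $u+Dt$ by $u+t$.

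\emph{Max exceedances.} Write $g(u)=\mathrm{pr}(\max_j X_j\geq u)$ and $\sigma=\sqrt{\max_j\sigma_{jj}}$; the event $\bm X-\bm t\in A_{\mathrm{max}}(u)$ is $\max_j X_j\geq u+t$, so it suffices to show $g(u)\sim c\exp(-u/\sigma)$ for some constant $c>0$, which then forces $g(u+t)/g(u)\to\exp(-t/\sigma)$. I would bound $g$ between $\max_j\mathrm{pr}(X_j\geq u)=0.5\exp(-u/\sigma)$ and $\sum_j\mathrm{pr}(X_j\geq u)$, observe that components with $\sigma_{jj}<\max_l\sigma_{ll}$ contribute only $o(\exp(-u/\sigma))$, and then expand the union over the index set $J$ of maximal-variance components by inclusion--exclusion: every intersection $\{X_i\geq u,\,X_j\geq u\}$ with $i,j\in J$ not almost surely equal has, by the bivariate part of Proposition~\ref{theor:mvtail}, residual coefficient strictly below $1$ and hence survival $o(\exp(-u/\sigma))$. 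Only the diagonal terms survive, giving $g(u)\sim(m/2)\exp(-u/\sigma)$ with $m$ the number of classes of almost surely equal maximal-variance components. I expect this to be the main obstacle: making the control of the joint-tail contributions precise, even though it reduces cleanly to Proposition~\ref{theor:mvtail}.

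\emph{Min exceedances.} With $\Sigma=\Sigma^*$ the event $\bm X-\bm t\in A_{\mathrm{min}}(u)$ is $X_j\geq u+t$ for all $j$, i.e. $\min_j X_j\geq u+t$. Since $\bm X$ has standard Laplace margins, $\bm X^*=\exp(\bm X)$ has standard log-Laplace margins, so the multivariate residual-coefficient part of Proposition~\ref{theor:mvtail} gives $\mathrm{pr}(\min_j X_j^*\geq x)=\ell(x)\,x^{-1/\rho}$ for large $x$, with $\rho=1/\sqrt{\bm e'(\Sigma^*)^{-1}\bm e}$ and $\ell$ slowly varying. Putting $x=\exp(v)$ yields $\mathrm{pr}(\min_j X_j\geq v)=\ell(\exp v)\exp(-v\sqrt{\bm e'(\Sigma^*)^{-1}\bm e})$, so the ratio at $v=u+t$ over $v=u$ equals $\{\ell(\exp t\cdot\exp u)/\ell(\exp u)\}\exp(-t\sqrt{\bm e'(\Sigma^*)^{-1}\bm e})$, and slow variation of $\ell$ drives the prefactor to $1$ as $u\to\infty$, which is the claim.
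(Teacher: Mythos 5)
Your proof is correct. The sum, marginal and min parts coincide with the paper's own argument: sum stability of the Laplace family for $A_{\mathrm{sum}}$ and $A_{x_1}$, and for $A_{\mathrm{min}}$ the representation $\mathrm{pr}(\min_j X_j^*\geq x)=\ell(x)x^{-1/\rho}$ from Proposition~\ref{theor:mvtail} together with slow variation of $\ell$ under the fixed multiplicative shift $e^t$ (the paper states this step in one line; you spell it out). Where you genuinely diverge is the max exceedance: the paper reduces, exactly as you do, to the subvector of maximal-variance components via the sandwich bound, but then treats the constant-variance core by passing to $\bm X^*=\exp(\bm X/\sigma)$ and invoking the vague convergence \eqref{eq:vague}, reading off $\eta(A_{\max}(1))=D$ as the extremal coefficient under asymptotic independence. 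You instead use Bonferroni/inclusion--exclusion over the maximal-variance index set and kill the pairwise terms $\mathrm{pr}(X_i\geq u,X_j\geq u)$ by the bivariate residual coefficient $\rho=\sqrt{(1+\rho_{\rm lin})/2}<1$, since $\ell(e^{v})e^{-v/\rho}=o(e^{-v})$. This is more elementary --- it avoids the multivariate regular-variation machinery entirely and even yields the explicit asymptotic constant $g(u)\sim(|J|/2)\exp(-u/\sigma)$, which the paper's ratio argument does not need to produce --- at the price of leaning on the standing full-rank assumption on $\Sigma$ to guarantee that distinct components in $J$ have correlation strictly below $1$ (your ``not almost surely equal'' caveat); you should state that reliance explicitly rather than leave it implicit.
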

\begin{proof}
\add{\protect 
The exact tail decay rate for sum exceedances follows
directly from the sum stability of the Laplace distribution, here with $\bm{\omega}=\bm{e}$:
\begin{align*}
\mathrm{pr}(\bm X - \bm t \in A_{\mathrm{sum}}(u)) &=
\mathrm{pr}\left(\sum_{j=1}^D X_j \geq Dt+u\right) \\ &= 0.5\exp\left(-(Dt+u)/\sqrt{\bm e'
  \Sigma \bm e}\right) \\ &= \exp\left(-Dt/\sqrt{\bm e'
  \Sigma \bm e}\right)\times \mathrm{pr}(\bm X \in A_{\mathrm{sum}}(u)).
\end{align*}
Similarly, the exact rate of marginal exceedances is obtained by
  setting $\bm \omega=(1,0,\ldots,0)$. 
To derive the formula for maxima, we first consider the case where 
$\sqrt{\sigma_{jj}}=\sigma>0$ is constant for $j=1,\ldots,D$. Then
$\bm X^*=\exp(\bm X/\sigma)$ is of standard log-Laplace type with
density \eqref{eq:loglaplace} and satisfies the convergence \eqref{eq:vague}.
By defining $t^*=\exp(t/\sigma)$ and $u^*=\exp(u/\sigma)$, we find that both of the expressions
$u^*\,\mathrm{pr}(\bm X^* \in A_{\max}(u^*))$ and $t^*u^*\, \mathrm{pr}(\bm X^*
\in A_{\max}(t^*u^*))$ tend to $\eta(A_{\max}(1))=D$ as $u^*\rightarrow
\infty$, where $\eta(A_{\max}(1))$ is  the
extremal coefficient whose value is $D$ under asymptotic
independence. Since $D>0$, we get $\mathrm{pr}(\bm X^*
\in A_{\max}(t^*u^*)) \sim (t^*)^{-1} \mathrm{pr}(\bm X^*
\in A_{\max}(u^*))$. In terms of $\bm X$, $t$ and $u$, we get the
stated convergence rate $\mathrm{pr}(\bm X - \bm{t} \in A_{\mathrm{max}}(u)) \sim
\exp\left(-t/\sigma\right) \mathrm{pr}(\bm X \in
A_{\mathrm{max}}(u))$. In the case where the diagonal values of $\Sigma$ are not
constant such that  $\sigma_{j_0j_0}=\max_{j=1,\ldots,D}
\sigma_{jj}$ for $j_0\in J$ with $1\leq |J| <D$, we start by observing
that 
\begin{equation}\label{eq:ineq} 
\mathrm{pr}\left(\max_{j_0\in J} X_{j_0} \geq u\right) \leq \mathrm{pr}\left(\bm X \in
A_{\max}(u)\right) \leq \mathrm{pr}\left(\max_{j_0\in J} X_{j_0} \geq u\right) +
\sum_{j\not\in J} \mathrm{pr}(X_j\geq u).
\end{equation}
Since $0\leq \mathrm{pr}(X_{j} \geq u) /\mathrm{pr}(\max_{j_0\in J}X_{j_0} \geq u)
\leq  \mathrm{pr}(X_{j} \geq u) /\mathrm{pr}(X_{j_1} \geq
u)\rightarrow 0$ as $u\rightarrow\infty$ for any $j_1\in J$ and
$j\not\in J$, dividing the inequality \eqref{eq:ineq} by
$\mathrm{pr}(\max_{j_0\in J} X_{j_0} \geq u)$ shows that  $\mathrm{pr}(\bm X \in
A_{\max}(u)) \sim \mathrm{pr}(\max_{j_0\in J} X_{j_0} \geq u)$ when
$u\rightarrow\infty$, and the convergence rate of maxima of $\bm X$ is the same
as the convergence rate of maxima of the subvector $\bm
X_J$. Applying the result for constant diagonal elements
$\sigma_{j_0j_0}$, $j_0\in J$ to $\bm X_J$  then proves the statement
for maxima. 
The asymptotic rate
for min exceedances is directly linked to  the definition of the multivariate residual
coefficient $\rho$ whose value is given in Proposition
\ref{theor:mvtail}.
}
\end{proof}
\change{\protect 
Since $\exp(A_{\rm sum}(u))$ and $\exp(A_{\rm min}(u))$ lie within
$(\bm 0, \bm \infty)$ where the standard scale limit measure $\eta$ in
\eqref{eq:vague} has no mass due
to the asymptotic independence of $\bm X$, the
decay rates are faster for these exceedances types than for $A_{x_1}$ and $A_{\rm max}$ where decay rates are determined by the
univariate marginal decay rate. 
}{\protect 
When $\Sigma=\Sigma^*$ such that margins of $\bm X$ are standard
  Laplace, then \eqref{eq:vague} holds for $\bm X^*=\exp(\bm X)$  and
  the limit measure $\eta$ is zero for the sets  $\exp(A_{\rm sum}(u))$
  and $\exp(A_{\rm min}(u))$ that lie within
$(\bm 0, \bm \infty)$. The
decay rates are faster for these exceedance types than for $A_{x_1}$ and $A_{\rm max}$ where decay rates are determined by the
 univariate marginal decay rate. Proposition \ref{cor:exc} shows that
$t\,\mathrm{pr}(\bm X^*/\tilde{t} \in A)$ tends to a finite positive limit when
choosing $\tilde{t}=t^{\sqrt{\bm e'\Sigma^* \bm e}/D}$ for
$A_{\mathrm{sum}}$ and $\tilde{t} =t^{1/\sqrt{\bm e'
    (\Sigma^*)^{-1}\bm e}}$ for $A_{\mathrm{min}}$.}

\add{\protect The sum stability of elliptical distributions further provides exact
formulas for related exceedance probabilities:  $\mathrm{pr}(\bm X \in
A_{\mathrm{sum}}(u)) = 0.5\exp(-u/\sqrt{\bm e'
  \Sigma \bm e})$ and $A_{x_1}(u)=0.5\exp(-u/\sqrt{\sigma_{11}})$. To 
avoid plain Monte-Carlo approaches for calculating probabilities
$\mathrm{pr}(\bm X \not\leq \bm x)=\mathrm{pr}(X_j>x_j\text{ for at
  least one } j)$ which can be too inaccurate or too
slow in certain cases, we can instead exploit tailor-made algorithms
for calculating multivariate normal probabilities
$\Phi_{\Sigma}(\bm x)$ \citep{Genz.Bretz.2009}. Univariate numerical
integration 
then yields more accurate values based on 
\begin{equation}\label{eq:excprob}
\mathrm{pr}(\bm X \not\leq \bm x)=1-
\int_{0}^\infty \Phi_{\Sigma}(\bm x/ y) f_Y(y) \, \mathrm{d}y.
\end{equation}
}
\add{\protect \subsection{Conditional distributions}}
\change{The following
proposition, which is the basis for conditional simulation of Laplace
vectors,  provides the conditional densities of $Y$ when conditioning on the value
of a subvector of the Laplace vector $\bm X$. }{
\protect The following
proposition provides a basis for conditional simulation of Laplace
random vectors conditional to the value
of a subvector. When $\bm X=(\bm X_1,\bm X_2)\sim \mathcal{L}(\Sigma)$ is a $D$-dimensional
Laplace vector with $\bm
X_1=(X_1,\ldots,X_d)$ for $d\in \{1,\ldots,D-1\}$, we write
$\Sigma_{ij}$, $1\leq i,j\leq 2$ for the corresponding blocks  of the
dispersion matrix $\Sigma$.
}
\begin{proposition}[Conditional distributions]
\label{theor:cond}
\leavevmode
\add{\protect 
\begin{enumerate}
\item  When $\bm X=Y\bm W\sim \mathcal{L}(\Sigma)$ with mixing
variable $Y\geq 0$, $Y^2\sim \mathrm{Exp}$ with scale $2$, conditioning on  $\bm
X=\bm x$ yields the conditional density of $Y$ given as
\begin{equation*}
f_{Y\mid \bm X=\bm x}(y ) = c(\bm x, \Sigma) \times  y^{-(D-1)}\exp\left(-0.5[y^2+\bm
  x'\Sigma^{-1}\bm x/y^2]\right), 
\end{equation*}
with the constant $c(\bm x, \Sigma)=
(2\pi)^{-0.5D}|\Sigma|^{-0.5}/f_{\bm X}(\bm x) > 0$ depending only on $\bm x$ and $\Sigma$.
For $D=1$ and $\Sigma=(1)$, we obtain the conditional density 
\begin{equation}\label{eq:ycondx}
f_{Y\mid X=x}(y ) = 2^{0.5}\pi^{-0.5}
\exp\left(-0.5[y^2+x^2/y^2]+x\right) 
\end{equation}
whose  mode is $\sqrt{x}$. 
\item  The conditional distribution of $\bm X_2 \mid \bm X_1=\bm x_1$ is
 elliptical with density
\begin{equation}
f_{\bm X_2\mid \bm X_1=\bm x_1}(\bm x) = c_0^{-1}|\tilde{\Sigma}|^{-0.5}g\left( (\bm x -
  \tilde{\bm \mu})' \tilde{\Sigma}^{-1}  (\bm x -
  \tilde{\bm \mu})+c_1 \right),
\end{equation}
where, using $\nu=0.5(2-D)$, 
\begin{align*}
g(r) &=\frac{2^{0.5}}{(2\pi)^{0.5D}}
(0.25\sqrt{r})^{0.5\nu}K_{\nu}(\sqrt{r}), \\
c_0 &=s_{D-d}\int_0^\infty g(r^2+c_1)r^{D-d-1}\, \mathrm{d}r, \quad
c_1=\bm x_2'\Sigma_{22}^{-1}\bm x_2, \\
\tilde{\mu} &= \Sigma_{21}\Sigma_{11}^{-1}\bm x_1, \quad \tilde{\Sigma}=\Sigma_{22}-\Sigma_{21}\Sigma_{11}^{-1}\Sigma_{12},
\end{align*}
and $s_n$ denotes the surface area of the unit ball in $\mathbb{R}^n$
with $s_1=1$ and $s_n=2\pi^{0.5n}/\Gamma(0.5n)$, $n > 1$. The radius
$\tilde{R}$ in the 
pseudo-polar representation $(\bm
X_2 \mid \bm X_1=\bm x_1) \overset{d}{=} \tilde{\mu}+\tilde{R}\tilde{A}\bm U$ with
$\tilde{\Sigma}=\tilde{A}\tilde{A}'$ has density
\begin{equation}
f_{\tilde{R}}(r) = s_{D-d}\,c_0^{-1} r^{D-d-1} g(r^2+c_1), \quad r > 0.
\end{equation}
\end{enumerate}
}
\end{proposition}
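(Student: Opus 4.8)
The plan is to obtain both parts from Bayes' theorem applied to the mixture representation $\bm X=Y\bm W$, together with the explicit Laplace density \eqref{eq:mvlaplace} that is already available. For Part~1, conditioning on $Y=y$ makes $\bm X$ centered Gaussian with covariance $y^2\Sigma$, so $f_{\bm X\mid Y=y}(\bm x)=(2\pi)^{-0.5D}y^{-D}|\Sigma|^{-0.5}\exp(-0.5\,\bm x'\Sigma^{-1}\bm x/y^2)$; multiplying by the Rayleigh density $f_Y(y)=y\exp(-0.5y^2)$ and dividing by $f_{\bm X}(\bm x)$ gives
\begin{equation*}
f_{Y\mid\bm X=\bm x}(y)=\frac{(2\pi)^{-0.5D}|\Sigma|^{-0.5}}{f_{\bm X}(\bm x)}\,y^{-(D-1)}\exp\left(-0.5\left[y^2+\bm x'\Sigma^{-1}\bm x/y^2\right]\right),
\end{equation*}
which is the stated identity with $c(\bm x,\Sigma)$ equal to the displayed prefactor, manifestly positive since $f_{\bm X}(\bm x)>0$.

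For the one-dimensional specialization, substituting $D=1$, $\Sigma=(1)$ and $f_X(x)=0.5\exp(-|x|)$ collapses the prefactor to $2^{0.5}\pi^{-0.5}\exp(x)$ (taking $x>0$, so $|x|=x$) and makes the factor $y^{-(D-1)}$ equal to $1$, recovering \eqref{eq:ycondx}. To locate the mode I would maximize $\log f_{Y\mid X=x}(y)=\mathrm{const}-0.5(y^2+x^2/y^2)$ in $y$; the stationarity condition $y-x^2/y^3=0$ forces $y^4=x^2$, hence $y=\sqrt{x}$, and this is the maximizer because the derivative passes from positive to negative there.

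For Part~2 the idea is that conditioning a Laplace vector on a subvector keeps it within the elliptical family, so I would read off the density from the ratio $f_{\bm X_2\mid\bm X_1=\bm x_1}(\bm x_2)=f_{\bm X}(\bm x_1,\bm x_2)/f_{\bm X_1}(\bm x_1)$, both densities being instances of \eqref{eq:mvlaplace}. Writing \eqref{eq:mvlaplace} as $f_{\bm X}(\bm x)=|\Sigma|^{-0.5}g(\bm x'\Sigma^{-1}\bm x)$ for the function $g$ of the statement, two Schur-complement identities for the partitioned $\Sigma$ do the work: the quadratic-form decomposition $\bm x'\Sigma^{-1}\bm x=c_1+(\bm x_2-\tilde{\bm\mu})'\tilde\Sigma^{-1}(\bm x_2-\tilde{\bm\mu})$, where $\tilde{\bm\mu}$, $\tilde\Sigma$ are as defined and $c_1$ is the constant left over after completing the square in $\bm x_2$, and the determinant factorization $|\Sigma|=|\Sigma_{11}|\,|\tilde\Sigma|$. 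These turn the numerator into $|\Sigma_{11}|^{-0.5}|\tilde\Sigma|^{-0.5}g(c_1+q)$ with $q=(\bm x_2-\tilde{\bm\mu})'\tilde\Sigma^{-1}(\bm x_2-\tilde{\bm\mu})$, while the denominator depends on $\bm x_1$ alone; hence $f_{\bm X_2\mid\bm X_1=\bm x_1}(\bm x_2)=c_0^{-1}|\tilde\Sigma|^{-0.5}g(q+c_1)$, the announced elliptical form with dispersion $\tilde\Sigma$. The same conclusion, in the guise relevant to conditional simulation, follows by conditioning first on $Y=y$: the Gaussian conditional $\bm X_2\mid(\bm X_1=\bm x_1,Y=y)$ has mean $\tilde{\bm\mu}$ (the $y^2$ cancels in $\Sigma_{21}\Sigma_{11}^{-1}$) and covariance $y^2\tilde\Sigma$, and integrating it against $f_{Y\mid\bm X_1=\bm x_1}$ from Part~1 reproduces the Bessel-type mixture that equals $g(q+c_1)$.

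Finally, $c_0$ and $f_{\tilde R}$ come from normalization. Since $f_{\bm X_2\mid\bm X_1=\bm x_1}$ integrates to one, the substitution $\bm x_2=\tilde{\bm\mu}+\tilde A\bm z$ with $\tilde\Sigma=\tilde A\tilde A'$ (Jacobian $|\tilde\Sigma|^{0.5}$, under which the quadratic form becomes $\|\bm z\|^2$) reduces this to $c_0^{-1}\int_{\mathbb{R}^{D-d}}g(\|\bm z\|^2+c_1)\,\mathrm{d}\bm z=1$, and passing to polar coordinates $\mathrm{d}\bm z=s_{D-d}r^{D-d-1}\,\mathrm{d}r\,\mathrm{d}\bm u$ gives $c_0=s_{D-d}\int_0^\infty g(r^2+c_1)r^{D-d-1}\,\mathrm{d}r$. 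The same change of variables exhibits $\bm X_2\mid\bm X_1=\bm x_1\overset{d}{=}\tilde{\bm\mu}+\tilde A\bm z$ with $\bm z$ spherically symmetric of density $c_0^{-1}g(\|\bm z\|^2+c_1)$; writing $\bm z=\tilde R\bm U$ and integrating out the angular variable yields $f_{\tilde R}(r)=s_{D-d}c_0^{-1}r^{D-d-1}g(r^2+c_1)$, with $\tilde A$ supplying the dispersion $\tilde A\tilde A'=\tilde\Sigma$ in the pseudo-polar representation. I expect the only genuinely non-routine ingredient to be the Schur-complement decomposition of the partitioned quadratic form and determinant — this is what pins down the shift $c_1$ and keeps the conditional law elliptical — while the remaining steps are standard elliptical-vector and polar-coordinate bookkeeping.
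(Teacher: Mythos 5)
Your proof is correct. For Part~1 your Bayes-formula computation is the same calculation as the paper's, which performs the change of variables $(y,\bm w)\leadsto(y,y\bm w)$ on the joint density $f_Y\times f_{\bm W}$ and then divides by $f_{\bm X}(\bm x)$; multiplying the Gaussian conditional $f_{\bm X\mid Y=y}$ by the Rayleigh density is just the factored form of that joint density, and your mode computation for the $D=1$ case is routine and right. For Part~2 the paper does not derive anything: it simply invokes Corollary~5 and Section~4 of \citet{Cambanis.al.1981} for conditional elliptical distributions, with $g$ identified through $f_{\bm X}(\bm x)=|\Sigma|^{-0.5}g(\bm x'\Sigma^{-1}\bm x)$. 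You instead reprove that result in this special case via the two Schur-complement identities and a polar-coordinate normalization, which is essentially the argument underlying the citation; the payoff of your route is that it is self-contained and also delivers the radial density $f_{\tilde R}$ and the normalizing constant $c_0$ directly, while the paper's route is shorter and defers the bookkeeping to the reference. Your remark that the same conditional law can be reached by first conditioning on $Y$ (the $y^2$ cancelling in $\Sigma_{21}\Sigma_{11}^{-1}$) matches the simulation algorithm the paper describes immediately after the proposition.

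One point worth flagging: your completion of the square gives $\bm x'\Sigma^{-1}\bm x=\bm x_1'\Sigma_{11}^{-1}\bm x_1+(\bm x_2-\tilde{\bm\mu})'\tilde\Sigma^{-1}(\bm x_2-\tilde{\bm\mu})$, so the constant left over is $c_1=\bm x_1'\Sigma_{11}^{-1}\bm x_1$, a function of the conditioning value $\bm x_1$ only. The proposition as printed writes $c_1=\bm x_2'\Sigma_{22}^{-1}\bm x_2$, which cannot be a constant since $\bm x_2$ is the running argument of the conditional density; this appears to be an index typo in the statement, and your derivation identifies the correct value. Make the formula $c_1=\bm x_1'\Sigma_{11}^{-1}\bm x_1$ explicit rather than leaving it as ``the constant left over,'' since it is the one quantity your argument pins down that the statement gets wrong.
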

\begin{proof}
\add{1.)} The change of variables $(y,\bm w)\leadsto (y,\bm x)=(y,y\bm w)$ in
$f_{Y,\bm W}=f_Y\times f_{\bm W}$ yields 
\begin{equation*}
f_{Y,\bm X}(y,\bm
x)=y^{-(D-1)}\exp(-0.5y^2) \times (2\pi)^{-0.5D}|\Sigma|^{-0.5}\exp\left(-0.5[\bm
  x'\Sigma^{-1} \bm x]/y^2\right).
\end{equation*}
 Using the
formula $f_{Y\mid \bm X=\bm x}(y)=f_{Y,\bm X}(y,\bm x)/f_{\bm X}(\bm
x)$ leads to the given expressions.
\add{\protect \\ 2.) The formulas for conditional elliptical distributions follow from the related standard theory, see Corollary 5
and Section 4 of 
\citet{Cambanis.al.1981}, with $g(\cdot)$ given through the relation
$f_{\bm X}(\bm x)=|\Sigma|^{-0.5}g(\bm x'\Sigma^{-1}\bm x)$.
}
\end{proof}
\add{\protect One concludes from the density expression \eqref{eq:ycondx} in
  Proposition \ref{theor:cond} that the random variable $Y\mid X=x$ is
 concentrated around its mode $\sqrt{x}$. Indeed, the ratio $f_{Y\mid
  X=x}(x^{0.5+\varepsilon})/f_{Y\mid X=x}(x^{0.5}) =\exp(-0.5(x^{1+2\varepsilon}+x^{1-2\varepsilon})+x)$ tends to $0$ exponentially fast  for $\varepsilon>0$ when $x\rightarrow\infty$.
Therefore, the distribution of
the renormalized variable $Y/x\mid X=x$ converges to a point
mass in $0$ as $x$ tends to infinity, and for $X$ being large both
$Y$ and $W$ must be large simultaneously.} This is different from
asymptotically dependent normal scale mixtures $Y \bm  W$ characterized by
a mixture variable $Y$ with tail index $\xi>0$, where  the most extreme
events are due only to high values of $Y$, see \citet{Opitz.2013}. 

\change{\protect When we want to condition $\bm X$ on the values of one of its
subvectors, we can write 
\begin{equation*}
\bm X=(\bm X_1,\bm X_2)=Y(\bm W_1,\bm W_2) \sim
\mathcal{L}(\Sigma), \quad \Sigma=\begin{pmatrix} \Sigma_{11} &
  \Sigma_{12} \\ \Sigma_{21} & \Sigma_{22}\end{pmatrix},
\end{equation*}
and we use Proposition \ref{theor:cond} to propose an algorithm for the
conditional simulation of $\bm X_2\mid \bm X_1=\bm x_1$. }
{For simulating $\bm X_2\mid \bm X_1=\bm x_1$, we could either directly
  calculate and simulate the elements of the pseudo-polar representation
  $\tilde{\mu}+\tilde{R}\tilde{A}\bm U$, or we can exploit the latent
  Gaussian structure as follows.}
We first sample  the mixture variable $\tilde{y}$ according to the density $f_{ Y\mid
  \bm X_1=\bm x_1}$ with a  standard
approach like inverse transform sampling based on the
numerical integration of $f_{Y\mid \bm X_1=\bm x_1}$. Given $\tilde{y}$, we sample
$\tilde{\bm {w}}_2$ according to the 
conditional Gaussian distribution $\bm W_2 \mid \bm W_1 = \bm x_1/\tilde{y}$
with mean $\Sigma_{21}\Sigma_{11}^{-1}\bm x_1/\tilde{y}$ and
covariance matrix
$\tilde{\Sigma}$. 

\subsection{Likelihood inference}
We assume that the data process $\{X(s), s\in K\}$ has been observed at
$D$ sites $\bm s=(s_1,\ldots,s_D)$. For simplicity's sake, we further
assume that the marginal distributions of the observed random vector
$X(\bm s)$ are of standard
Laplace type. In practice, this requires marginal pretransformation
according to an adequate univariate model $F_{X(s)}$. We adopt the exceedance-based approach characterized through
the censored likelihood \eqref{eq:likeligeneral}.
We have to fix an exceedance set
$A\subset \mathbb{R}^D$ to characterize the region of extreme
events.
Since we are assuming extremal dependence according to
the Laplace random field model, we assume that the density $f_{ X(\bm s)}(\bm x)$ of $ X(\bm s)$ is equal to  the
multivariate Laplace density \eqref{eq:mvlaplace} for $\bm x \in
A$. Denote $p_A=\mathrm{pr}( X(\bm s) \in A)$ the exceedance probability for a
Laplace vector. 
Given a correlation model with (parametric)
correlation matrix $\Sigma^*(\bm s)$, the likelihood contribution of an observed
vector $ x(\bm s) = (x(s_1),\ldots,x(s_D))$ is 
\begin{equation}
\label{eq:lik}
L(\Sigma^*(\bm s); x(\bm s)) = 1(x(\bm s) \not\in A)\times
(1-p_A(\Sigma^*(\bm s))) + 1( x(\bm s)  \in A)\times
f_{X(\bm s)}(x(\bm s)). 
\end{equation}
When no tailor-made method exists for calculating the  exceedance
probability $p_A(\Sigma^*(\bm s))$, a simple
Monte-Carlo procedure can be applied by generating an independent and
identically distributed sample of $X(\bm s)$ and
by using the observed proportion of realizations  inside the exceedance
region $A$. We recommend a sample size of at least $10000$, leading to a
standard approximation error of around
$\sqrt{p_A}/\sqrt{10000}=\sqrt{p_A}/100$ for small $p_A$. 

\section{Application to wind gusts}




We illustrate modeling on daily maximum wind gust data from the Netherlands
collected from 14/11/1999 to 13/11/2008 for $30$ meteorological
stations, available for download from the
Royal Netherlands Meteorological Institute
(\url{www.knmi.nl}). Modeling extreme wind gusts is important for
applications like insurance risk
\citep{Brodin.Rootzen.2009,Mornet.al.2015}, forest damage
\citep{Pontailler.al.1997,Dhote.2005,Nagel.al.2006} or wind farming
\citep{Seguro.Lambert.2000,Steinkohl.al.2013}. 
\add{\protect Recent studies on similar data \citep{Engelke.al.2015,Einmahl.al.2015,Oesting.al.2015}
are based on max-stable models without challenging the assumption of
asymptotic dependence. 
For the present data, we will show that the asymptotically independent Laplace
model satisfactorily accomodates a situation that is inherent to
asymptotically independent data: tail correlation estimates
$\hat{\lambda}$ tend towards lower
values  when 
the tail fraction used for estimation is decreased. Moreover, when
using asymptotically dependent models, such data
behavior makes the choice of thresholds for inferential methods 
difficult.}
Other studies have shown that \change{asymptotically dependent models  are not
  appropriate}{asymptotically independent models 
are preferable} for wind speed data \citep{Ledford.Tawn.1996,Opitz.2013a}, hence  we
limit our analysis to asymptotic independence models by comparing models based on marginally transformed Laplace fields or
Gaussian fields. We
removed a small number of observation vectors with missing components
and retain observations for  $3241$ days. We  conducted a preliminary analysis that showed moderate to weak
day-to-day dependence of extreme wind gusts. Here
we focus only on spatial modeling of intra-day dependence and neglect
seasonal variations and 
clustering of extremes across several days.  

For estimating the parameters of the dependence structure with the
likelihood \eqref{eq:lik}, we use 
max exceedance sets $A_{\rm max}(u)$. This yields an
estimation strategy that is coherent for marginal modeling
(using observations above the marginal threshold $u$) and dependence
modeling (using observations within $A_{\rm max}(u)$ where at least
one marginal threshold is exceeded). 
Since the parameter vector comprising marginal parameters and dependence
parameters can be of dimension up to $7$ in our models, the joint estimation of marginal parameters and dependence
parameters through numerical maximization of the likelihood is prone
to numerical instabilities. We therefore estimate separately marginal
parameters (through the independence likelihood) and dependence
parameters. Since fitted univariate distributions often model
imperfectly the actual data distribution, which is not unusual for spatial data due
to their high dimensionality, we prefer to use the empirical
probability integral transform for transforming data to univariate
standard Laplace or standard Gaussian margins for estimating dependence
parameters. Computations have been carried out with the
\texttt{R} programming language \citep{Rcite}.

\subsection{Data exploration}
\label{sec:dataexpl}
Figure \ref{fig:stations} shows the spatial setup of
stations and information about sitewise quantiles for probabilities
$0.95$ and $0.99$. 
One can conjecture that wind gusts are slightly more extreme when sites are closer to
the coastline; this conclusion can be drawn for both of the 
investigated probability values \add{\protect and is illustrated by the color code
in Figure \ref{fig:stations} highlighting the $10$ highest and lowest
quantile values among the $30$ sites}.  Otherwise, quantiles seem stationary
in space.  \add{\protect Moreover, the ratio
$(q_{0.99}(s)/\overline{q}_{0.99})/(q_{0.95}(s)/\overline{q}_{0.95})$
with $\overline{q}_p$ the mean of the $p$-quantiles over all sites $s$
 is approximately constant, as can be seen from the
 information in Figure \ref{fig:stations} where symbols $+$ and
 $\times$ are of approximately the same size at each site. This
 justifies modeling the spatial nonstationarity of univariate tails
 through  a nonstationary scale parameter in the Weibull tail model 
 detailed in the following.}
Owing to the country's flat surface, it is reasonable to
assume that tail distributions vary only with respect to distance to
the sea.  Due to the rugged and irregular shape of the Netherlands' coastline, we use one
site's distance to the segments connecting the coastal sites as a proxy for its
distance to the sea. If a new site lies between this curve and
the actual coastline, we set its distance value to $0$. 

The left-hand plot in Figure \ref{fig:explor} shows empirical estimates of the tail correlation function
$\lambda(h)$ with respect to the distance between two sites, 
smoothed with local regression techniques (LOESS), for threshold
values $u$ given as $0.9$, $0.95$, $0.98$, $0.99$, $0.995$.  Estimates decrease
when $u$ increases, meaning that tail correlation weakens when we go
farther in the tail, which is  typical for asymptotically
independent data. Note that the
decision for or against asymptotic independence based on a finite data
sample can rarely be made with absolute certainty in practice. \add{\protect Although tail correlation estimates move closer to $0$
only very slowly when $u$ increases, goodness-of-fit checks  in Section \ref{sec:depmod}
 will confirm that the asymptotically independent Laplace model
can reproduce such behavior.} The assumption of
asymptotic independence is therefore plausible, and the right-hand
display of Figure \ref{fig:explor} shows pairwise estimates of the
residual coefficient $\rho$, calculated as
the Hill estimator \citep{Hill.1975,Draisma.al.2004} based on the largest $k=40$ order statistics of $\min(X^*(s),X^*(s+\Delta s))$ for all pairs of
sites. From the
pins in Figure \ref{fig:explor} indicating the orientation of the spatial lag of site pairs, it is difficult to
conclude on the presence of geometric anisotropy; we will therefore use model
selection tools to decide on this issue.
\begin{figure}
\centering
\includegraphics[width=10cm]{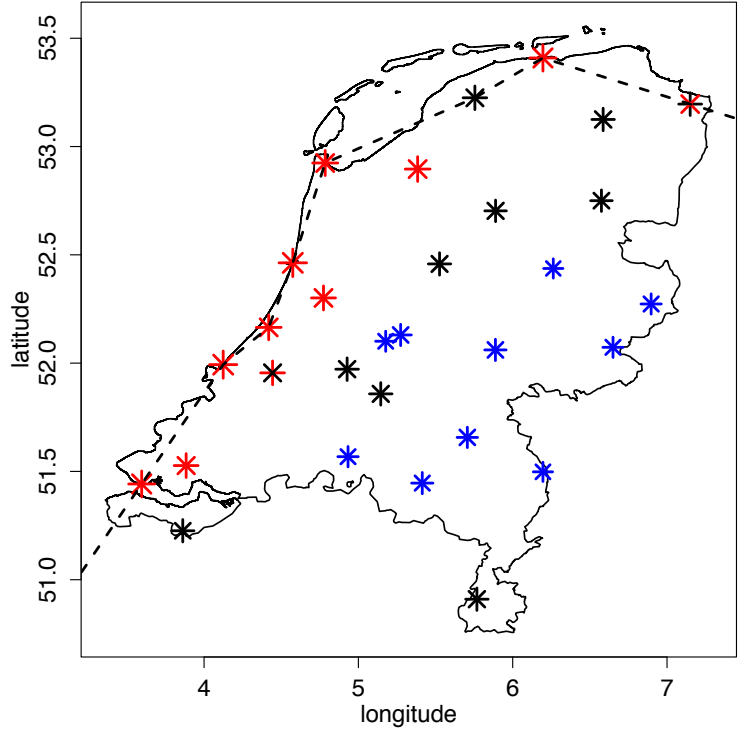}
\caption{Measurement sites for wind gusts in the Netherlands.  \emph{Coastal stations} are connected with a
  dashed and piecewise linear curve. Each
  site is marked with a $+$-symbol, whose size is proportional to the empirical
  $0.95$-quantile \add{divided by the mean of all
    such quantiles},
  and a $\times$-symbol, whose size is proportional \add{(with the same
  constant as for $0.95$-quantiles)} to the empirical
  $0.99$-quantiles\add{, divided by the mean of all
    such quantiles}. In each case, the sites associated to the $10$
  highest quantile values are marked in red, and the sites associated
  to the $10$ lowest quantile values are marked in blue.}
\label{fig:stations}
\end{figure}
\begin{figure}
\centering
\includegraphics[height=4cm]{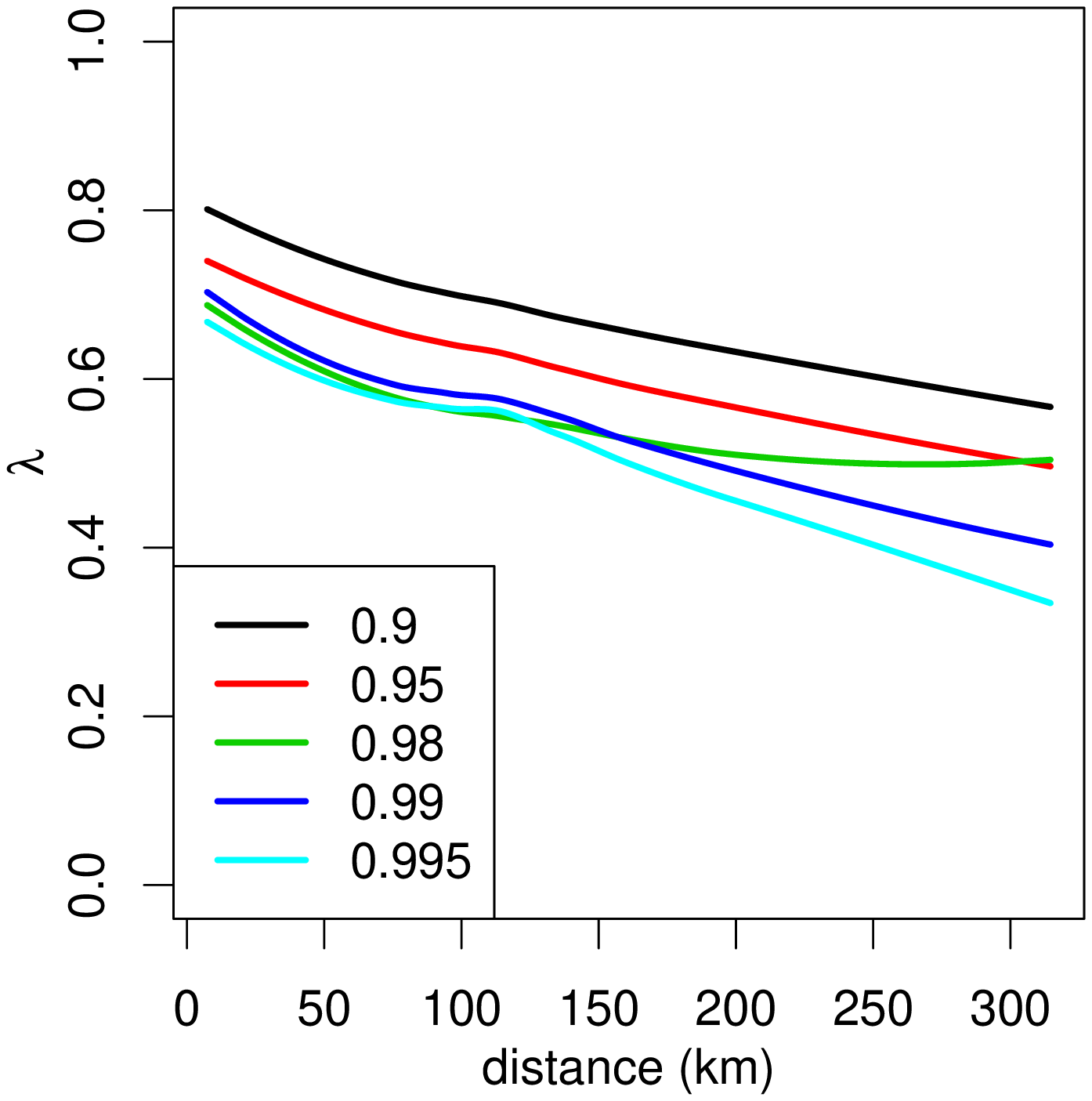}
\includegraphics[height=4cm]{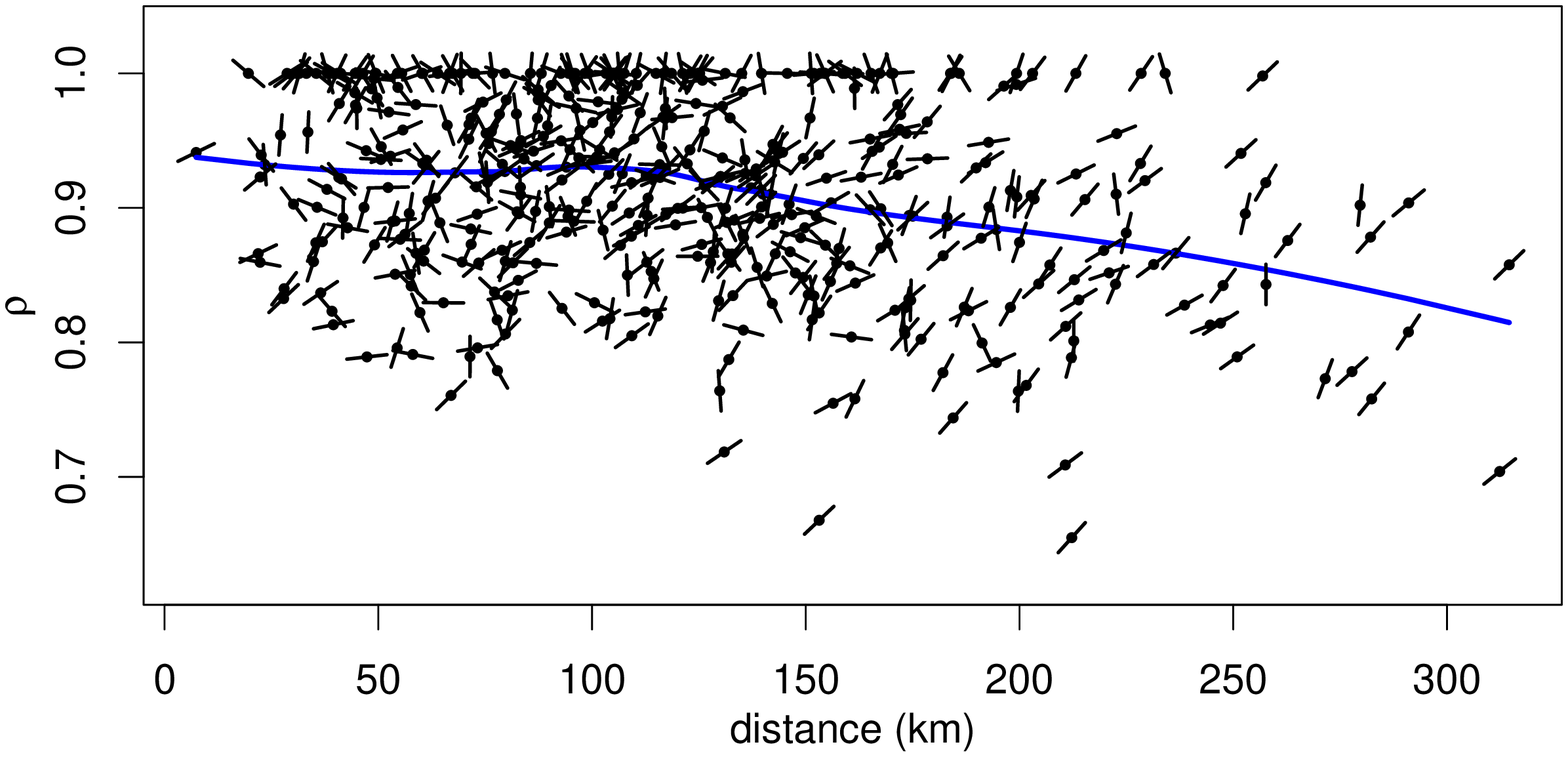}
\caption{Exploratory plots. Left: LOESS-smoothed empirical tail
  correlation functions $\lambda$ with respect to intersite
  distance, estimated for different thresholds $u$. Right: Empirical estimates of the residual
  coefficient $\rho$ with respect to intersite distance and
  local regression curve (LOESS); pins
  indicate the orientation of the spatial lag.}
\label{fig:explor}
\end{figure}
\subsection{Marginal modeling}
\label{sec:margmod}
The vast literature on wind
speed modeling has identified the Weibull distribution as the most
adequate candidate for univariate tails 
\citep{Stevens.Smulders.1979,Seguro.Lambert.2000,Akadaug.al.2009}. Therefore,
we will not fit the classical tail model \eqref{eq:univtailmodel} and
the related generalized Pareto distribution \eqref{eq:gpd} for exceedances to univariate tails,
but instead the Weibull distribution\add{ whose density is $x\mapsto 1_{[0,\infty)}(x)\gamma
\delta^{-\gamma} x^{\gamma-1}\exp[-\{x/\delta\}^\gamma]$ with
parameter constraints $\delta>0$ and $\gamma>0$}.  The Weibull distribution extends
the classical  tail model with shape
$\xi=0$ (limit of Gumbel type) and mean $\mu=0$ by introducing a supplementary shape
parameter $\gamma$;  
the classical model then applies to transformed data $x^{\gamma}$. 
We account for spatial variation by allowing the covariate ``distance to
the sea in \emph{km}'' to modify the
value of the Weibull scale parameter $\delta=\exp(\delta_0+\delta_1\times
\mathrm{covariate})$.\remove{\protect Based on our exploratory analysis in  Section
\ref{sec:dataexpl}, we choose $\mathrm{covariate}$ as the distance to
the sea in \emph{km}.} For a fixed threshold $u>0$, the contribution
of an observation $x$ to the independence likelihood of  univariate tail
parameters is
\begin{equation}
\label{eq:censweib}
1(x\leq u) \times \left(1-\exp\left[-\{x/\delta\}^\gamma\right]
\right)+1(x >  u) \times \gamma \delta^{-\gamma} x^{\gamma-1}\exp\left[-\{x/\delta\}^\gamma\right].
\end{equation}
\remove{where parameter constraints are $\delta>0$ and $\gamma>0$.} In
preliminary studies, we further
tried to include a location parameter in a way similar to the
classical tail
model \eqref{eq:univtailmodel}, \emph{i.e.} replacing $x$ by $x-\mu$
with $\mu < u$ in \eqref{eq:censweib}, but we could not detect any
substantial improvement in the goodness-of-fit and the numerical maximization
of the likelihood became unstable. 
After fixing $u$ to the empirical $0.975$-quantile calculated from all
observations, estimates and bootstrap-based confidence intervals of level $0.95$ are given as
follows: $\hat{\gamma}=1.72$($1.55,1.86$),
$\hat{\delta}_0=2.44$($2.29,2.52$) \emph{m/s} and
$\hat{\delta}_1=-0.0021$($-0.0026,-0.0019$)  \emph{m/s}, where
 confidence intervals have been calculated from a
block bootstrap sample (block
size $30$, sample size $100$). The effect of distance to the sea
$\delta_1$ is significantly different from $0$ \add{and
  indicates that the strength of wind gusts weakens when we move away
  from the coastline.}
We follow common practice in spatial extreme value modeling and use the
empirical distribution below the threshold by assuming dense
observations in the bulk region of the distribution
\citep{Coles.Tawn.1991,Wadsworth.Tawn.2012,Wadsworth.Tawn.2013}. 

\subsection{Dependence modeling}
\label{sec:depmod}
We consider Gaussian and Laplace dependence models with correlation
functions of exponential, stable or Mat\'ern type. For the Mat\'ern
model, we tried out a selection of regularity parameters $\nu\in\{0.1,
0.15, 0.2, 0.25, 0.3,
0.4, 1, 1.5, 2.5, 5\}$. 
We further allow geometric anisotropy to accommodate a potentially
different scale of dependence along one direction, for instance
stronger dependence orthogonal to the coastline for winds hitting land
from the sea.
By denoting $\theta\in[0,\pi)$ a rotation angle and $b\geq 1$ a stretching
along this direction, we replace the original bivariate column distance vector $\Delta
\bm s$ by the matrix product
\begin{equation*}
\begin{pmatrix} b & 0 \\ 0 & 1\end{pmatrix}\begin{pmatrix}\cos(\theta)
  & -\sin(\theta) \\ \sin(\theta) & \cos(\theta) \end{pmatrix} \Delta
\bm s = M\, \Delta \bm s.
\end{equation*}
Parameters $\theta$ and $b$ are estimated.  We use Akaike's
information criterion \textsc{AIC} to select the best model. To make
values of the censored likelihood \eqref{eq:lik} comparable for the
Gaussian and the Laplace model,  we first transform the original data $\bm x$  to
$\tilde{\bm x}$ with uniform margins  through the empirical transform  for each of
the $30$ sites. If $L(\bm \theta; \bm x)$ is the likelihood
\eqref{eq:lik} for either Laplace or Gaussian margins, we use instead
the likelihood $\tilde{L}$ in terms of $\tilde{\bm x}$, 
\begin{equation}
\label{eq:likdata}
\tilde{L}(\bm \theta;\tilde{\bm x}) = L(\bm \theta; F^{-1}(\tilde{\bm x}))\times
\prod_{i,j}(f(F^{-1}(\tilde{x}_{ij})))^{-1}, \quad j=1,\ldots,30, \quad i=1,\ldots,3241,
\end{equation}
where $F$ is either the standard Laplace or standard Gaussian cdf, and
$f$ is the corresponding density. 

Table \ref{tab:est} sums up the estimation results, where the highest \textsc{AIC}
values \add{\protect $2\log \tilde{L}(\hat{\bm \theta};\tilde{\bm x}) -
2\, \mathrm{dim}(\hat{\bm \theta})$} of both the Laplace and Gaussian Mat\'ern models were obtained
for the shape parameter $\nu=0.25$. As before, standard errors have
been calculated through the block bootstrap approach. When calculating \textsc{AIC}, we
consider Mat\'ern models with different values of $\nu$ as different
models, such that $\nu$ is not counted for the model dimension. 
The shape parameters retained for the stable and the
Mat\'ern class indicate nondifferentiable trajectories. Geometric anisotropy improves $\textsc{AIC}$, but the
improvement is relatively small compared to the differences between
the three classes of correlation functions. \add{Since $\hat{b}>1$, we
  find that spatial dependence is weakest along the
direction vector $(\cos(\hat{\theta}),\sin(\hat{\theta}))\approx
(0.58,0.81)$, here calculated for the Mat\'ern Laplace model, whereas
it is strongest when we move orthogonally to this direction. This direction approximately follows the
orientation of the coastline and we can conclude that winds hitting land
from the sea (or the other way round) are stronger dependent in space.}
Throughout, Laplace models outperform their Gaussian counterpart. Overall, we select the Laplace model
with the anisotropic Mat\'ern correlation which obtains the highest AIC value. 
The
Gaussian model with the same covariance family is second
best. \add{Since the isotropic Mat\'ern correlation function attains the 
 value $0.1$  approximately at distance 
$\sqrt{8\nu} \times \mathrm{scale}=\sqrt{2}\times\mathrm{scale}$, the estimated dependence is strong across the study region.}
For an
illustration of the goodness-of-fit, we look at the quantile-quantile
plots of the distribution of the sum of the observation vector above
the $0.95$-quantiles in
Figure \ref{fig:qq}. 
 The sum distribution is a good choice since its tail
decay behavior can be specified with an exact, nonasymptotic
expression for elliptical distributions, see the results of  Proposition
\ref{cor:exc} for the Laplace distribution. 
\add{\protect To facilitate the comparison of the displays for the Gaussian and the
Laplace model, theoretical quantiles of the  standard
Laplace distribution are used and the data are transformed accordingly. For instance,  the transformation of a data
vector $\tilde{\bm x}_i=\tilde{x}_{ij}$, $j=1,\ldots,30$ as defined
in \eqref{eq:lik} into its corresponding standard Laplace quantile $\tilde{q}_i$
is done as follows  for the Gaussian model:
\begin{equation*}
  \tilde{q}_i = F^{-1}\left(\Phi\left(\sum_{j=1}^{30}
    \Phi^{-1}(\tilde{x}_{ij})/\sqrt{\bm e' \hat{\Sigma}\bm e}\right)\right),
\end{equation*}
where $F$ is the standard Laplace cdf and $\Phi$
is the standard Gaussian cdf.}
In both cases, data points are aligned close to the
diagonal, with no strong differences between the Gaussian and the
Laplace model. Figure \ref{fit:bvresid} shows the difference of empirical and  theoretical 
residual coefficients for these two models, with empiricial
coefficients given as on the right-hand side of Figure
\ref{fig:explor}. The local mean of differences has slightly positive
values, which can at least partly be explained by second-order terms with respect to
the asymptotic decay rate.
\add{\protect Finally, Figure \ref{fit:tcf} contrasts
  the exact  
  conditional exceedance probabilities $\lambda_u(s_1,s_2) = \mathrm{pr}(F_{X(s_1)}(X(s_1)) \geq u \mid
F_{X(s_2)}(X(s_2)) \geq u)$ of the two models  with the corresponding
tail correlation estimates of data already presented on the left in Figure
\ref{fig:explor}. Whereas the Laplace model convincingly reproduces the observed
behavior in data, the Gaussian model is strongly biased towards too
low values. 
}



\begin{table}
\begin{center}
{\scriptsize
\begin{tabular}{llrrrrrrrrrr}
Covariance & Type & $\theta$ & $b$ & scale & shape & log-L & $\textsc{AIC}$ \\
\hline\hline
\multirow{ 4}{*}{exp} &G&$-$($-$)&$-$($-$)&369.0(22.44)&$-$($-$)&17420&34830\\

&L&$-$($-$)&$-$($-$)&376.6(18.43)&$-$($-$)&17620&35230 \\

&G&0.42(0.2)&1.14(0.06)&384.3(11.61)&$-$($-$)&17430&34850\\

&L&1.02(0.15)&1.09(0.05)&381.9(7.7)&$-$($-$)&17630&35250\\
\hline
\multirow{ 4}{*}{stable}
&G&$-$($-$)&$-$($-$)&3294(5.14)&0.53(0.01)&17740&35470\\

&L&$-$($-$)&$-$($-$)&3299(47.49)&0.50(0.01)&17890&35770\\

&G&0.89(0.11)&1.25(0.11)&3302(3.97)&0.54(0.02)&17750&35500\\

&L&1.14(0.1)&1.31(0.11)&3299(58.8)&0.52(0.01)&17930&35840\\
\hline
\multirow{ 4}{*}{Mat\'ern}
&G&$-$($-$)&$-$($-$)&3149(381.4)&0.25($-$)&17780&35560\\

&L&$-$($-$)&$-$($-$)&3087(297)&0.25($-$)&17900&35800\\

&G&0.87(0.12)&1.24(0.07)&3384(136.4)&0.25($-$)&18280&36550\\

&L&0.95(0.07)&1.41(0.09)&3698(121.7)&0.25($-$)&18440&36860\\

\end{tabular}
}
\end{center}
\caption{Estimated tail models for max exceedances of the Netherlands
  wind gust data. ``Type'' is either G for the Gaussian likelihood or L
  for the Laplace likelihood. Standard errors have been calculated
  through a block bootstrap approach. 
For the calculation of Akaike's
  criterion (\textsc{AIC}; here values are rounded to $4$ significant digits), the Mat\'ern shape parameter is not
  considered as an estimated parameter. }
\label{tab:est}
\end{table}

\begin{figure}
\centering
\includegraphics[width=5cm]{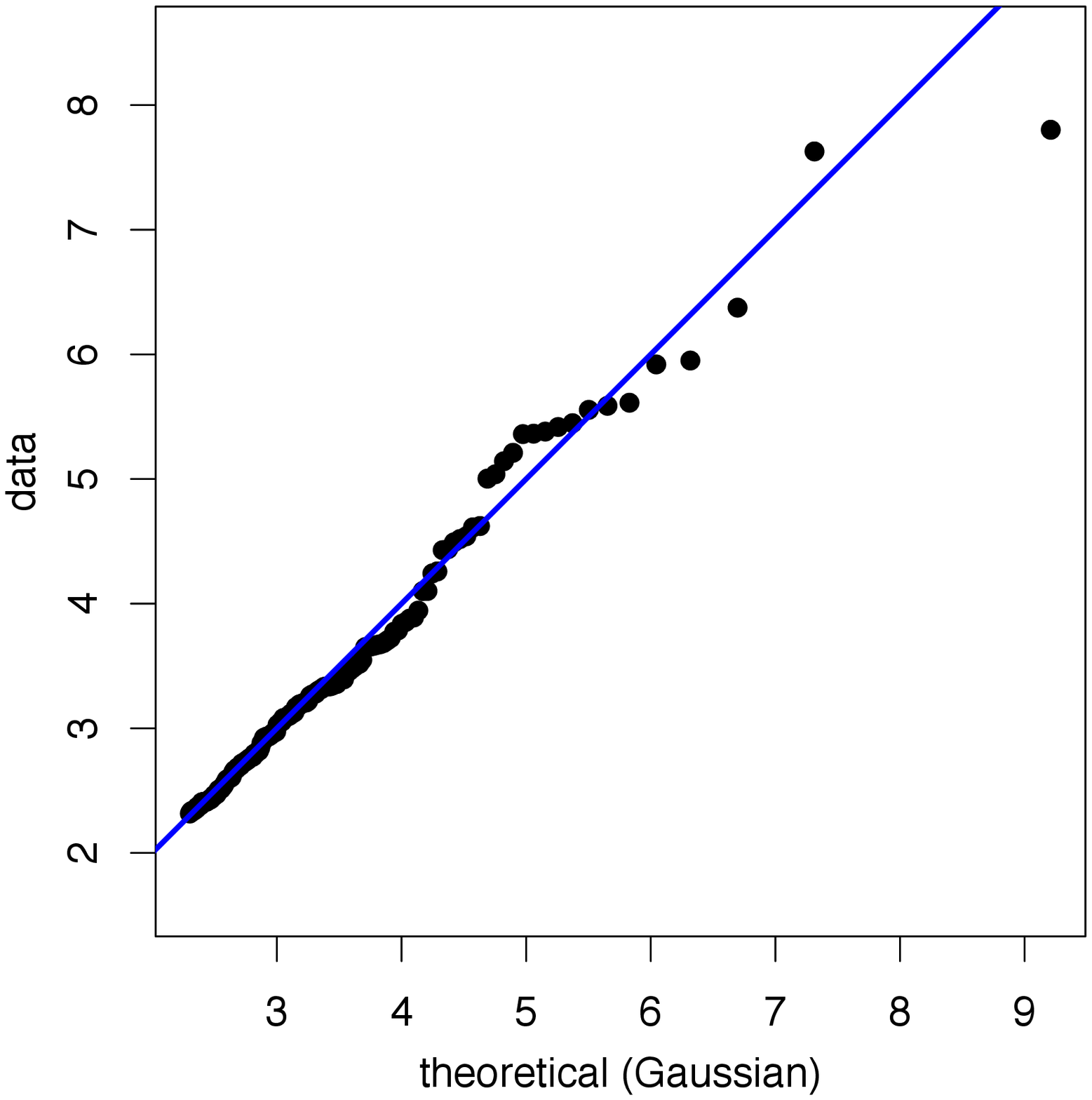}
\includegraphics[width=5cm]{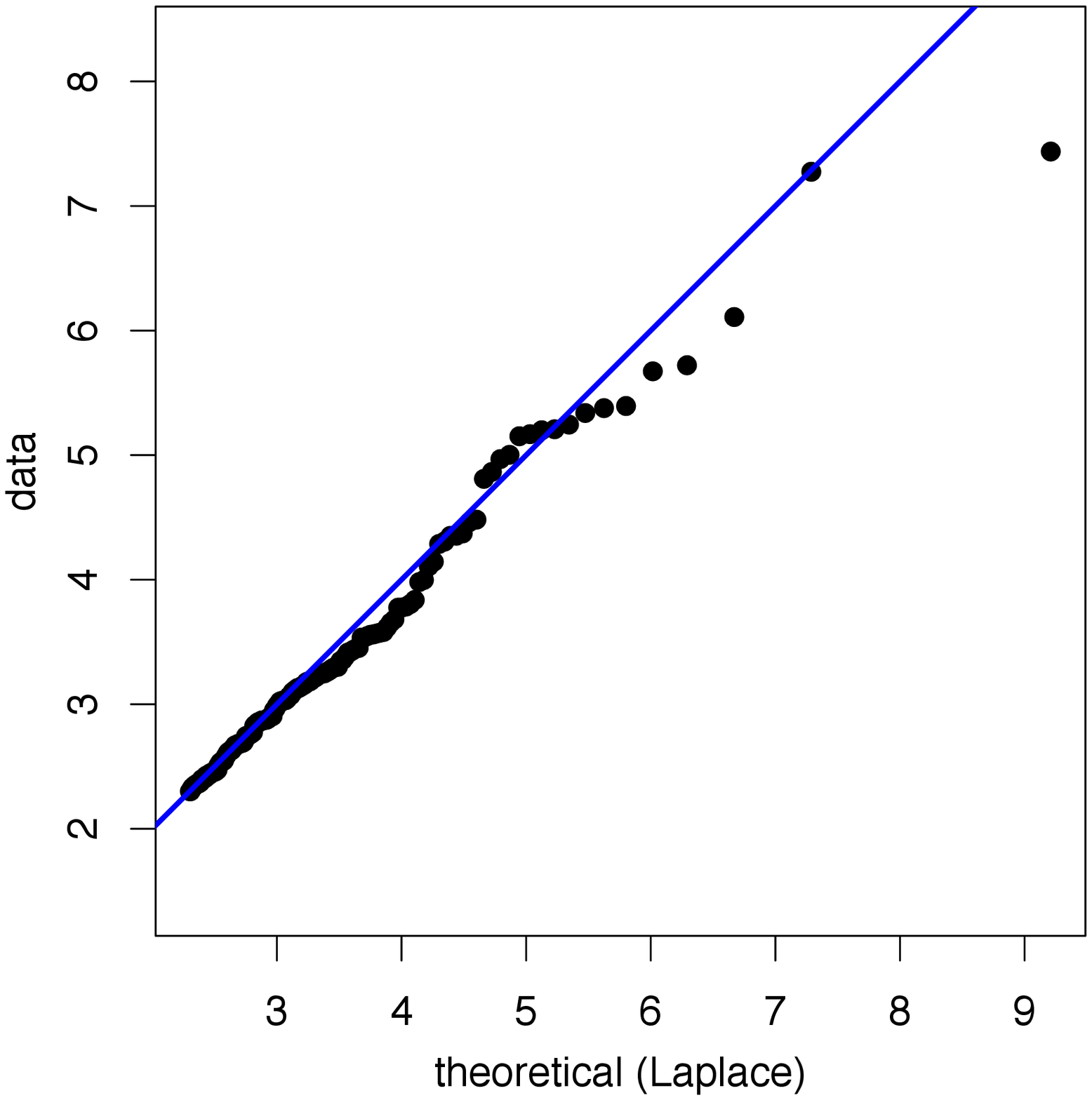}
\caption{Quantile-quantile plots for sum exceedances above the $0.95$-quantile in the Gaussian
  (left display) and the Laplace
  (right display) 
  tail dependence models with respect to the estimated Mat\'ern correlation models.  In both cases, the theoretical quantiles
  correspond to a standard Laplace tail and the  data quantiles are transformed
  accordingly.}
\label{fig:qq}
\end{figure}

\begin{figure}
\centering
\includegraphics[width=5cm]{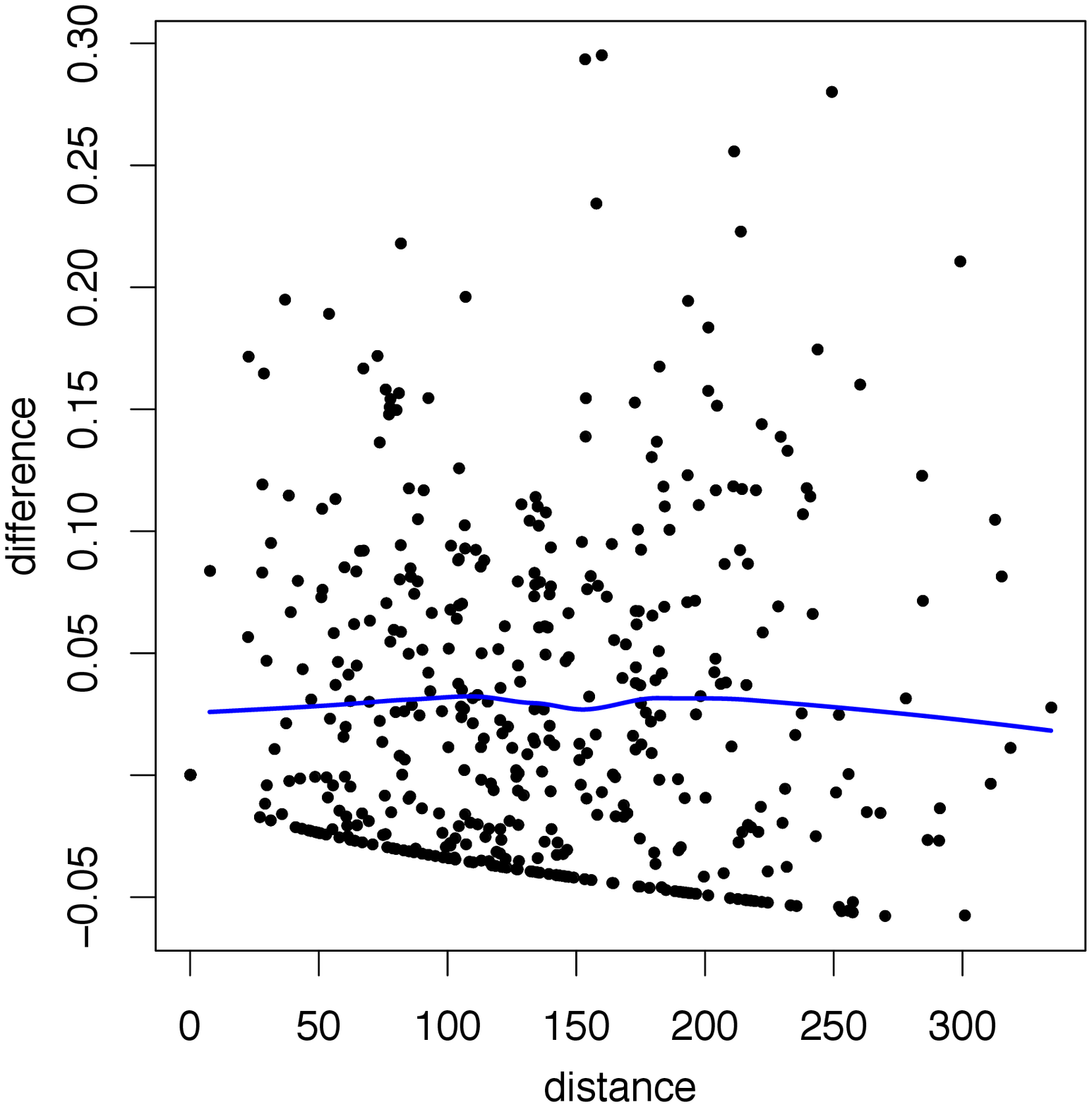}
\includegraphics[width=5cm]{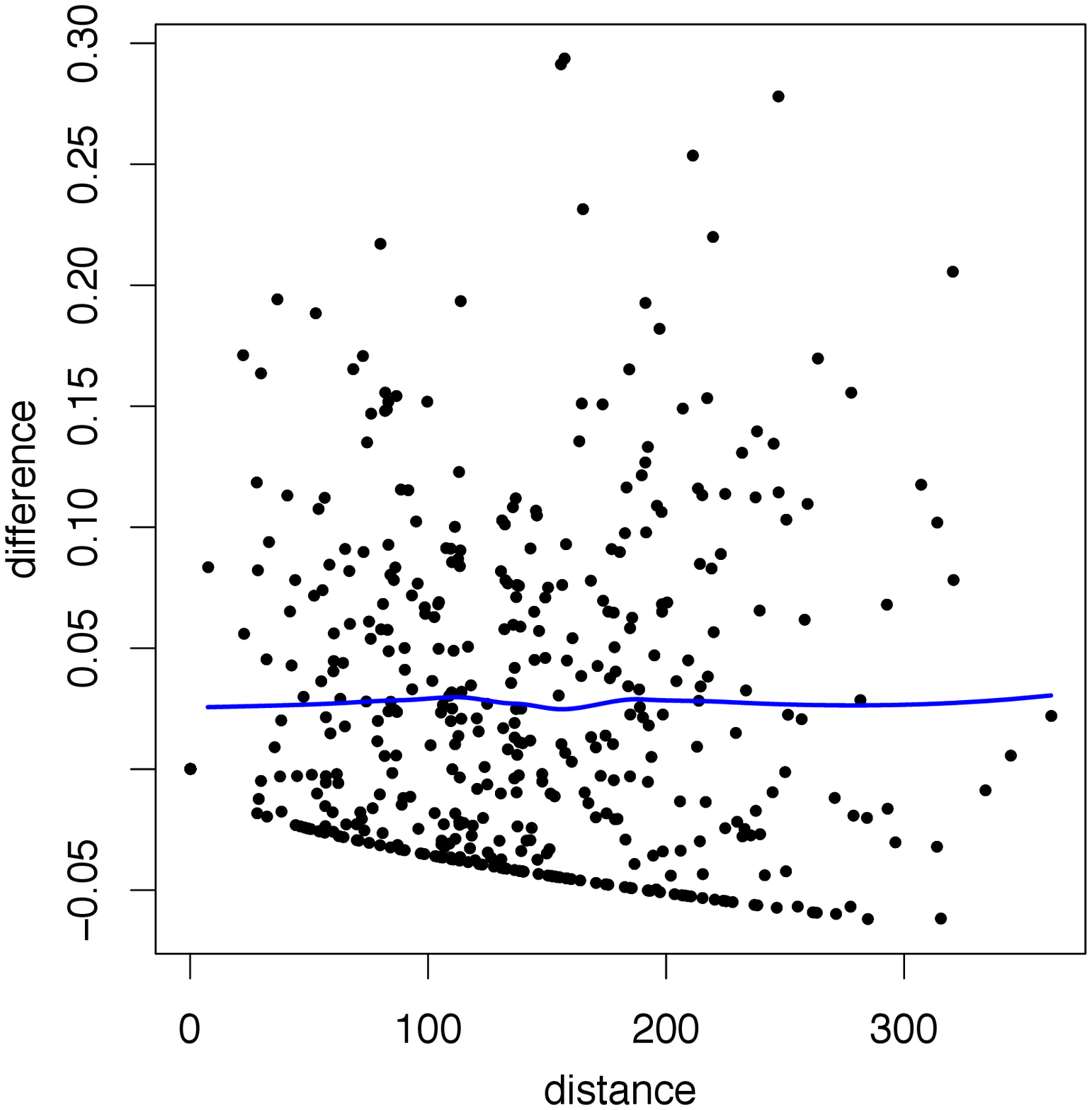}
\caption{Difference of pairwise empirical estimates and theoretical
  values  for the residual coefficient in the Gaussian
  (left) and the Laplace (right)
  tail dependence models with respect to the estimated Mat\'ern correlation models. Distances
  have been corrected for anisotropy according to the estimated
  anisotropy matrices such that $\mathrm{dist}=\|M\Delta \bm s\|_2$. }
\label{fit:bvresid}
\end{figure}

\add{\protect
\begin{figure}
\centering
\includegraphics[width=3.6cm]{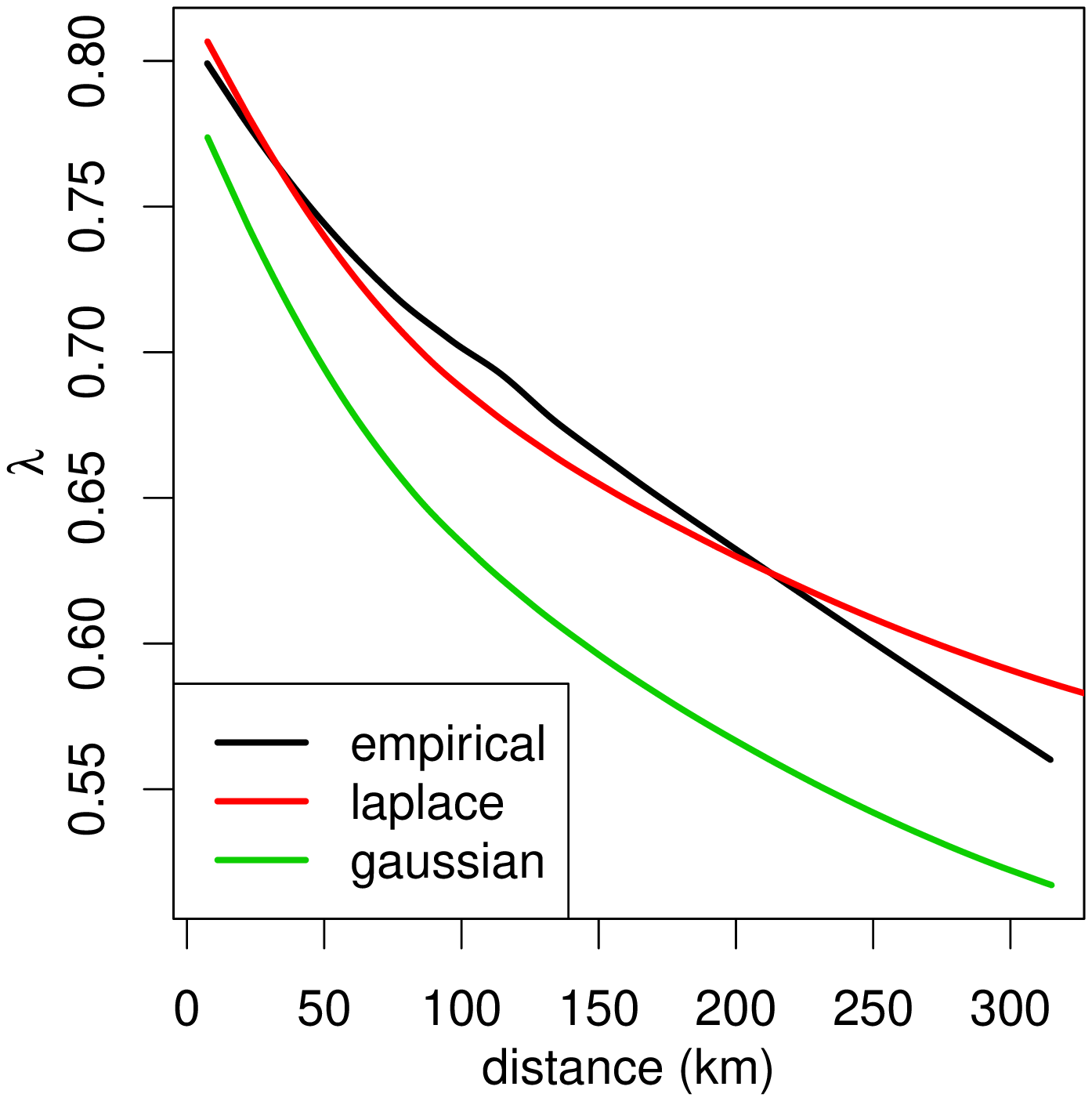}
\includegraphics[width=3.6cm]{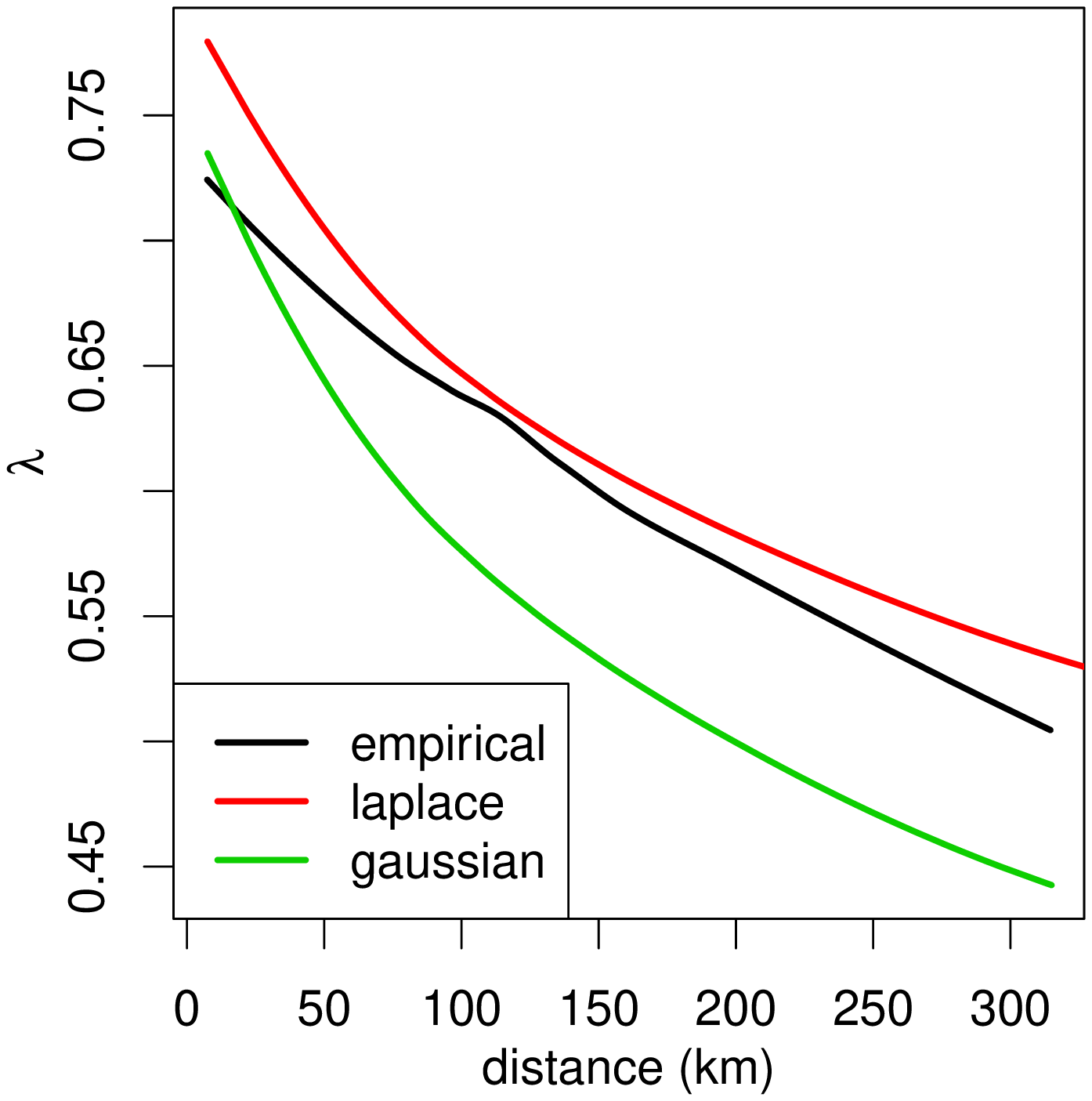}
\includegraphics[width=3.6cm]{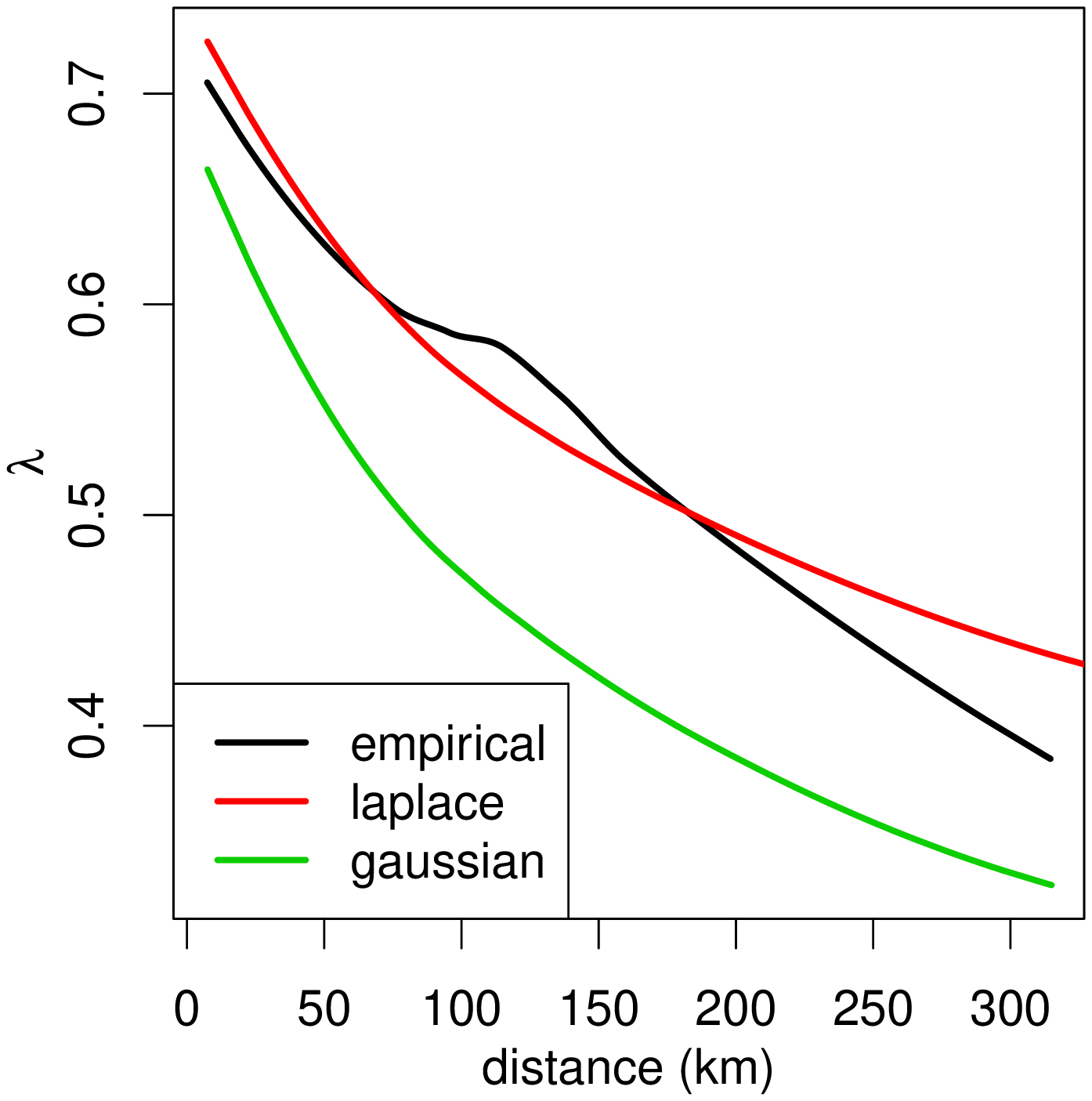}
\caption{Comparison of conditional exceedance probabilities $\lambda_u$ between data and fitted Laplace and
Gaussian models. From left to right for threshold values $u$ given by
$\{0.9,0.95,0.99\}$. The curves for data correspond to the
empirical smoothed tail correlation estimates presented on the left of
Figure \ref{fig:explor}.}
\label{fit:tcf}
\end{figure}
}
\subsection{Conditional simulations \add{ and return levels/periods}}
To illustrate conditional simulation for the  Laplace model according to the algorithm described at the end of Section \ref{sec:asymptotics}, Figure \ref{fig:condsim} shows
examples for two scenarios. In the first one, we condition on values
at the $30$ sites observed on 18 January 2007 during the Kyrill storm,
whose mean $32.4$ \emph{m/s} is the highest over the observation
period. 
In the second scenario, we condition on the estimated daily return level
$x_{\mathrm{cond}}=51.0$ \emph{m/s} of one of the coastal sites with respect to a
long return period of 10000 years, \emph{i.e.}, $1/(365.25\times 10^4) = \exp(-
(x_{\mathrm{cond}}/\hat{\delta})^{\hat{\gamma}})$. The gradient of
marginal scale with respect to distance to the sea is well
perceptible in both simulations.

\begin{figure}
\centering
\includegraphics[width=6cm]{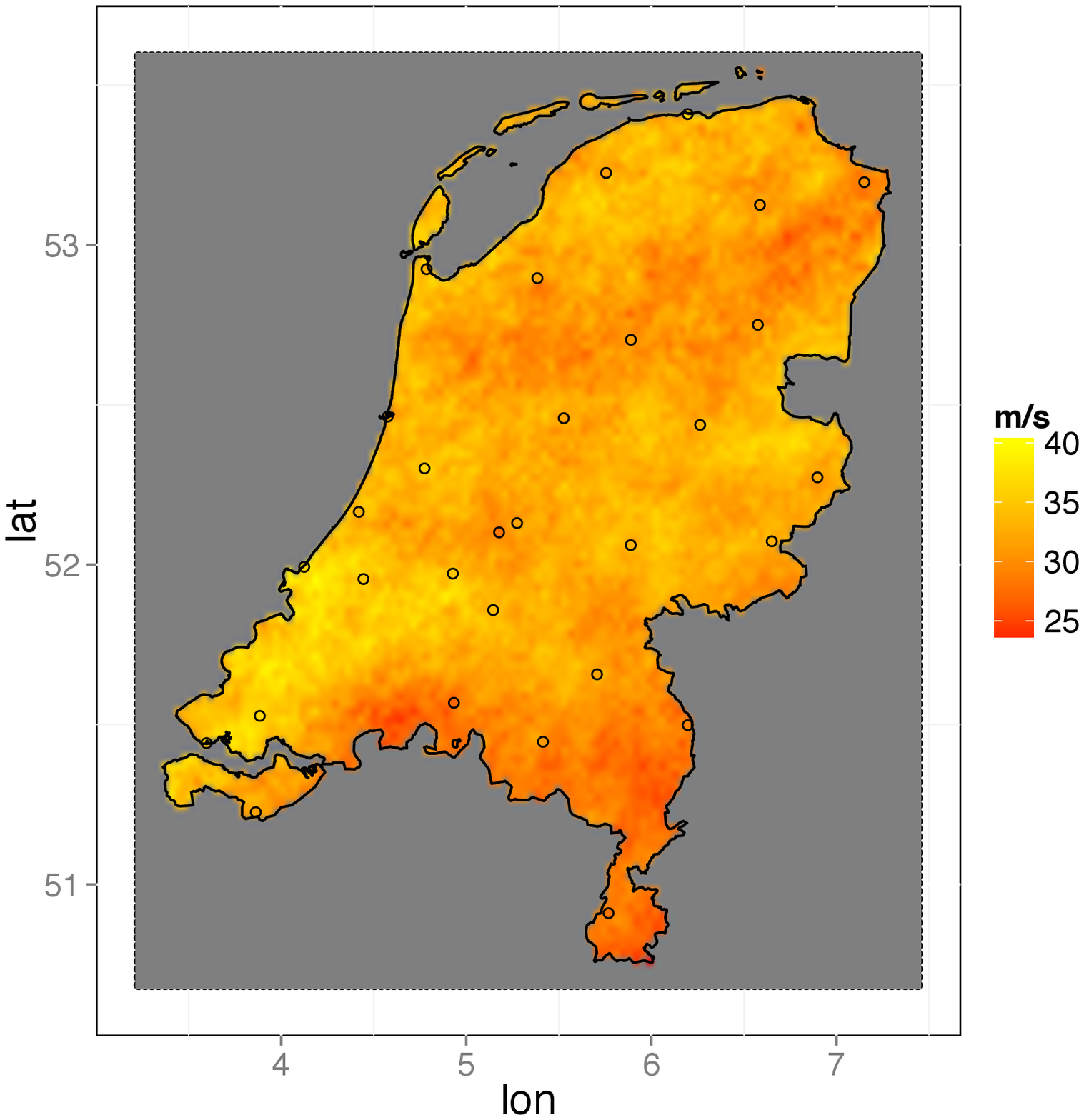}
\includegraphics[width=6cm]{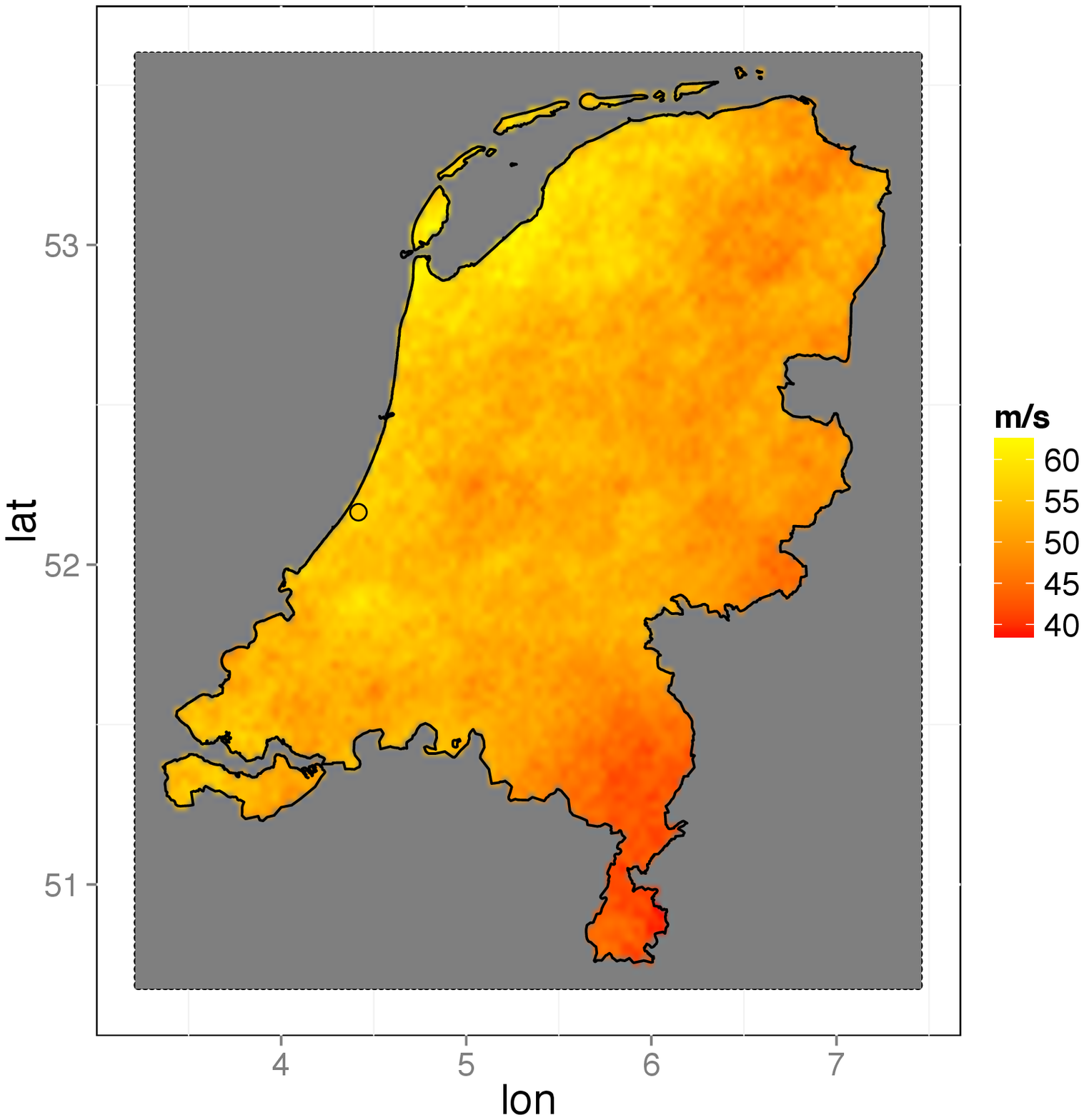}
\caption{Conditional simulations. Left: conditional to highest daily
  mean observation, observed on 18 January 2007. Right: conditional to 10000 year daily return
  period of one coastal site. Points indicate the conditioning sites
  and their values.}
\label{fig:condsim}
\end{figure}

\add{\protect Regarding the calculation of  return periods and levels  
  taking into account spatial dependence, we first
  consider the return period of an exceedance of the level $51$
  \emph{m/s} at at least one  point in the Netherlands territory.  Based on a fine spatial
  discretization $s_1,\ldots,s_{345}$ of the Netherlands and the
  corresponding correlation matrix $\hat{\Sigma}^*$ of the Laplace model, we calculate sitewise
  standard-scale values $\tilde{x}(s_j)=1-\exp(-x_0/\hat{\delta}(s_j))^{\hat{\gamma}}$,
  $j=1,\ldots, 345$ with $x_0=51$. Using formula \eqref{eq:excprob} for
  numerical integration, the return period in years is 
\begin{equation}\label{eq:RP}
\mathrm{RP}(x_0) = \frac{1}{ 365.25 \times \left( 1 - \int_0^\infty
\Phi_{\hat{\Sigma}^*}\left(\left( F^{-1}( \tilde{x}_1), \ldots,
    F^{-1}(\tilde{x}_{345})\right)/y\right) f_Y(y) \, \mathrm{d}y\right)}  = 376,
\end{equation}
where $F$ is the univariate standard Laplace distribution.
Therefore, considering the whole country
instead of a single site decreases the return period from $10000$ to
$376$ years. We can further determine the return level
associated to a return period of $10^c$ years ($c=0,1,2,3,4$) for the maximum observed over
the whole territory. We
numerically calculate the value $x_0^*$ for which the function
$x_0\mapsto |\mathrm{RP}(x_0)-10^c|$ attains its minimum $0$,
yielding values (in \emph{m/s})  $37.3$, $44.0$, $49.9$, $55.1$, $ 60.0$ for 
 $c=0,1,2,3,4$ respectively. 
}

\section{Discussion and perspectives}

Due to its generalized Pareto tails and the \remove{geometric} sum stability of
its multivariate elliptical distributions, the Laplace tail model has advantageous properties for
extreme value analysis while remaining close to the Gaussian processes
used in classical
geostatistics. For conditional and unconditional simulation in high
dimensions, the well-studied Gaussian techniques are
the main ingredient. Standard maximum likelihood inference
is possible.
In contrast to the Gaussian copula
model, it is easier to
interpret in the extreme value context \add{since the elliptical structure
  arises for Laplace marginal distributions which, unlike the
  univariate Gaussian, are more natural for
  extremes.}
Moreover,  the
standardization to marginal \change{half
exponential}{Laplace} distributions yields an
exponential tail decay rate that is relatively close to the actually
observed tail decay rates in environmental processes like wind speeds
or precipitations.  

The goodness-of-fit checks for wind gust data
revealed \change{no striking difference
between Gaussian and Laplace dependence structures, yet we can point
out a considerably better fit of the Laplace model  in
terms of Akaike's criterion.}{ a better fit of the Laplace model   in
terms of Akaike's criterion and bivariate joint tail decay behavior.}
We did not take into account temporal dependence among extremes. A simple
possibility for capturing clustering behavior in time would be to
model the time series of a summary statistics like the sum of the
marginally normalized
observations at the observed time steps with models akin to the exponential ARMA
processes of \citet{Lawrance.Lewis.1980}. 

The Gaussian scale mixture
construction of Laplace models
opens perspectives to more complex models. The latent exponential
variance variables
could be dependent  over time or vary over space, which would  then necessitate
a Bayesian framework to handle latent variables. 
\add{Finally, other than exponential variables $Y$ could be
  substituted for the variance of the Gaussian field. Such modeling
  would lead to 
  higher flexibility in the extremal dependence structure with respect to tail decay rates, yet densities
  and cdfs are usually not available in closed form, and typically the resulting marginal
  distributions  are not natural in the context of extremes.}

\section{Acknowledgements}
The author is grateful to an editor and to two referees for numerous
comments that helped improving the manuscript. 

 \bibliographystyle{elsarticle-harv} 
 \bibliography{ref}


\end{document}